\def\showauthornotes{1}
\def\showkeys{0}
\def\showdraftbox{1}
\def\showcolorlinks{1}
\def\usemicrotype{1}
\def\showfixme{1}
\def\stocmode{0}
\def\jamesmode{0}
\def\arxivmode{0}
\def\fastmode{0}
\newcommand{\cost}{\mathsf{cost}}
\newcommand{\llangle}{\left\langle}
\newcommand{\rrangle}{\right\rangle}
\newcommand{\vvW}{\vvmathbb{W}}
\newtheorem{theorem}{Theorem}[section]
\newtheorem*{theorem*}{Theorem}
\newtheorem*{proposition*}{Proposition}
\newtheorem{lemma}[theorem]{Lemma}
\newtheorem*{lemma*}{Lemma}
\newtheorem*{conjecture*}{Conjecture}
\newtheorem*{fact*}{Fact}
\newtheorem*{exercise*}{Exercise}
\newtheorem*{hypothesis*}{Hypothesis}
\theoremstyle{definition}
\newtheorem{exercise-easy}[theorem]{Exercise}
\newtheorem{exercise-med}[theorem]{Exercise}
\newtheorem{exercise-hard}[theorem]{Exercise$^\star$}
\newtheorem*{claim*}{Claim}
\newtheorem*{remark*}{Remark}
\newtheorem*{observation*}{Observation}
\let\mathbb\varmathbb
\definecolor{bleudefrance}{rgb}{0.01, 0.1, 1.0}
\definecolor{azure}{rgb}{0.0, 0.5, 1.0}
\newcommand{\savehyperref}[2]{\texorpdfstring{\hyperref[#1]{#2}}{#2}}
\newcommand{\Sref}[1]{\hyperref[#1]{\S\ref*{#1}}}
\newcommand{\mynotes}[1]{{\sffamily\small\color{teal}{#1}}\medskip}
\newcommand{\Authornote}[2]{{\sffamily\small\color{blue}{[#1: #2]}}\medskip}
\newcommand{\Authornotecolored}[3]{{\sffamily\small\color{#1}{[#2: #3]}}}
\newcommand{\Authorcomment}[2]{{\sffamily\small\color{gray}{[#1: #2]}}}
\newcommand{\Authorstartcomment}[1]{\sffamily\small\color{gray}[#1: }
\newcommand{\Authorfnote}[2]{\footnote{\color{red}{#1: #2}}}
\newcommand{\Authorfixme}[1]{\Authornote{#1}{\textbf{??}}}
\newcommand{\Authormarginmark}[1]{\marginpar{\textcolor{red}{\fbox{\Large #1:!}}}}
\newcommand{\myexplain}[1]{{\sffamily\small\color{red}{\noindent [Explanation:\medskip\newline \begin{quote}#1\hfill]\end{quote}}}\medskip}
\newcommand{\explain}[1]{{\sffamily\small\color{red}{#1}}\medskip}
\newcommand{\mynotes}[1]{}
\newcommand{\Authornote}[2]{}
\newcommand{\Authornotecolored}[3]{}
\newcommand{\Authorcomment}[2]{}
\newcommand{\Authorstartcomment}[1]{}
\newcommand{\Authorfnote}[2]{}
\newcommand{\Authorfixme}[1]{}
\newcommand{\Authormarginmark}[1]{}
\newcommand{\myexplain}[1]{}
\newcommand{\explain}[1]{}
\renewcommand{\myexplain}[1]{{\sffamily\small\color{red}{\noindent \begin{quote}{\bf Explanation:} \medskip\newline #1\end{quote}}}\medskip}
\newcommand{\Esymb}{\mathbb{E}}
\DeclareMathOperator*{\E}{\Esymb}
\newcommand{\textparen}[1]{\text{(#1)}}
\newcommand{\because}[1]{\textparen{because #1}}
\renewcommand{\because}[1]{\textparen{because #1}}
\newcommand{\seteq}{\mathrel{\mathop:}=}
\newcommand\bdot\bullet
\newcommand{\Ind}{\mathbb I}
\newcommand{\Ind}{\mathds 1}
\DeclareMathOperator{\dist}{dist}
\newcommand{\Z}{\mathbb Z}
\newcommand{\R}{\mathbb R}
\newcommand{\cL}{\mathcal L}
\newcommand{\cT}{\mathcal T}
\renewcommand{\leq}{\leqslant}
\renewcommand{\geq}{\geqslant}
\let\epsilon=\varepsilon
\numberwithin{equation}{section}
\newcommand\MYcurrentlabel{xxx}
\newcommand{\MYstore}[2]{%
  \global\expandafter \def \csname MYMEMORY #1 \endcsname{#2}%
}
\newcommand{\MYload}[1]{%
  \csname MYMEMORY #1 \endcsname%
}
\newcommand{\MYnewlabel}[1]{%
  \renewcommand\MYcurrentlabel{#1}%
  \MYoldlabel{#1}%
}
\newcommand{\MYdummylabel}[1]{}
\newcommand{\torestate}[1]{%
  % overwrite label command
  \let\MYoldlabel\label%
  \let\label\MYnewlabel%
  #1%
  \MYstore{\MYcurrentlabel}{#1}%
  % restore old label command
  \let\label\MYoldlabel%
}
\newcommand{\restatetheorem}[1]{%
  % overwrite label command with dummy
  \let\MYoldlabel\label
  \let\label\MYdummylabel
  \begin{theorem*}[Restatement of \prettyref{#1}]
    \MYload{#1}
  \end{theorem*}
  \let\label\MYoldlabel
}
\newcommand{\restatelemma}[1]{%
  % overwrite label command with dummy
  \let\MYoldlabel\label
  \let\label\MYdummylabel
  \begin{lemma*}[Restatement of \prettyref{#1}]
    \MYload{#1}
  \end{lemma*}
  \let\label\MYoldlabel
}
\newcommand{\restateprop}[1]{%
  % overwrite label command with dummy
  \let\MYoldlabel\label
  \let\label\MYdummylabel
  \begin{proposition*}[Restatement of \prettyref{#1}]
    \MYload{#1}
  \end{proposition*}
  \let\label\MYoldlabel
}
\newcommand{\restatefact}[1]{%
  % overwrite label command with dummy
  \let\MYoldlabel\label
  \let\label\MYdummylabel
  \begin{fact*}[Restatement of \prettyref{#1}]
    \MYload{#1}
  \end{fact*}
  \let\label\MYoldlabel
}
\newcommand{\restate}[1]{%
  % overwrite label command with dummy
  \let\MYoldlabel\label
  \let\label\MYdummylabel
  \MYload{#1}
  \let\label\MYoldlabel
}
\newcommand{\addreferencesection}{
  \phantomsection
\ifnum\stocmode=0
  \addcontentsline{toc}{section}{References}
\else
  \addcontentsline{toc}{section}{References \hspace*{1in} --------- End of extended abstract ---------}
\fi

}
\let\origparagraph\paragraph
\renewcommand{\paragraph}[1]{\vspace*{-10pt}\origparagraph{#1.}}
\let\pref=\prettyref
\newcommand{\diam}{\mathrm{diam}}
\newcommand{\vertiii}[1]{{\left\vert\kern-0.25ex\left\vert\kern-0.25ex\left\vert #1 
          \right\vert\kern-0.25ex\right\vert\kern-0.25ex\right\vert}}
\renewcommand{\Ind}{\vvmathbb{1}}
\DeclareMathOperator{\argmin}{\mathrm{argmin}}
\renewcommand{\Z}{\vvmathbb{Z}}
\renewcommand{\R}{\vvmathbb{R}}
\begin{document}

\title{Metrical task systems on trees \\
via mirror descent and unfair gluing\footnote{A preliminary version of this paper appeared at SODA 2019.}}
\author{S\'ebastien Bubeck \\ {\small Microsoft Research} \and Michael B. Cohen \\ {\small MIT} \and James R. Lee \qquad Yin Tat Lee \\ {\small University of Washington}}

\date{}

\maketitle

\begin{abstract}
We consider metrical task systems on tree metrics, and present an $O(\mathrm{depth} \times \log n)$-competitive randomized algorithm based on the mirror descent framework introduced in our prior work on the $k$-server problem. For the special case of hierarchically separated trees (HSTs),
we
use mirror descent
to refine the standard approach based on gluing unfair metrical task systems. This yields an $O(\log n)$-competitive algorithm for HSTs, thus removing an extraneous $\log\log n$ in the bound of Fiat and Mendel (2003).
Combined with well-known HST embedding theorems,
this also gives an $O((\log n)^2)$-competitive randomized algorithm
for every $n$-point metric space.
\end{abstract}

\setcounter{tocdepth}{2}
\tableofcontents

\newpage

\section{Introduction}
Let $(X,d)$ be a finite metric space with $|X|=n > 1$. The Metrical Task Systems (MTS) problem, introduced in \cite{BLS92}, can be described as follows.
The input is a sequence $\langle c_t : X \rightarrow \R_+ : t \geq 1\rangle$ of nonnegative cost functions on the state space $X$.
At every time $t$, an online algorithm maintains
a state $\rho_t \in X$.

The corresponding cost is the
sum of a {\em service cost} $c_t(\rho_t)$ and a {\em movement cost} $d(\rho_{t-1}, \rho_t)$.
Formally, an {\em online algorithm} is a sequence of mappings
$\bm{\rho} = \langle \rho_1, \rho_2, \ldots, \rangle$
where, for every $t \geq 1$,
$\rho_t : (\R_+^X)^t \to X$ maps a sequence of cost functions $\langle c_1, \ldots, c_t\rangle$
to a state. The initial state $\rho_0 \in X$ is fixed. The {\em total cost of the algorithm $\bm{\rho}$ in servicing $\bm{c} = \langle c_t : t \geq 1\rangle$} is defined as:
\[
   \cost_{\bm{\rho}}(\bm{c}) \seteq \sum_{t \geq 1} \left[c_t\left(\rho_t(c_1,\ldots, c_t)\right) + d\left(\rho_{t-1}(c_1,\ldots, c_{t-1}), \rho_t(c_1,\ldots, c_t)\right)\right].
\]
The cost of the {\em offline optimum}, denoted $\cost^*(\bm{c})$, is the infimum of
$\sum_{t \geq 1} [c_t(\rho_t)+d(\rho_{t-1},\rho_t)]$ over {\em any} sequence
$\llangle \rho_t : t \geq 1\rrangle$ of states.
A {\em randomized online algorithm} $\bm{\rho}$  is said to be {\em $\alpha$-competitive}
if for every $\rho_0 \in X$, there is a constant $\beta > 0$ such that for all
cost sequences $\bm{c}$:
\[
\E\left[\cost_{\bm{\rho}}(\bm{c})\right] \leq \alpha \cdot \cost^*(\bm{c}) + \beta\,.
\]

\paragraph{Tree metrics} 
We will be 
primarily concerned with tree metrics:  Those metric spaces $(X,d)$ that correspond
to the shortest-path distance on a finite (rooted) tree with prescribed nonnegative edge lengths.
We refer to the {\em combinatorial depth} of such a tree
as the depth of the corresponding unweighted tree.

A {\em hierarchical separated tree} (HST) with separation $\tau > 1$ is a tree metric such that, for any edge, the diameter of the subtree rooted at the end of the edge is at most $1/\tau$ times the weight of that edge.
Such a space is referred to as a {\em $\tau$-HST metric.}
The importance of HSTs stems from the well-known fact that any metric space can be probabilistically embedded into a weighted HST of depth $O(\log n)$ and separation $\tau$ with 
distortion $O(\tau \log n)$ \cite{Bar96, FRT04, BBMN15}.
In particular, an $O(f(n))$-competitive algorithm for $\tau$-HSTs
implies an $O(f(n) \tau \log n)$ competitive algorithm for an arbitrary $n$-point metric space.

\paragraph{Contributions and related work}
For the $n$-point uniform metric, i.e., the path metric
on the leaves of a unit-weighted star,
a simple coupon-collector argument shows that the competitive ratio has to be $\Omega(\log n)$, and this is tight \cite{BLS92}. A long-standing conjecture is that this $\Theta(\log n)$ competitive ratio holds for an arbitrary $n$-point metric space.

The lower bound has almost been established \cite{BBM06,BLMN05};
for any $n$-point metric space, the competitive ratio is $\Omega(\log n / \log \log n)$.
On the other hand, a matching upper bound of $O(\log n)$ was previously
only known for weighted star metrics (this can be deduced from the analysis in \cite{BBN12}).
Our first result extends this bound to constant-depth tree metrics as follows.

\begin{theorem} \label{thm:trees}
   There is an $O(D \log n)$-competitive randomized
   algorithm for MTS on any $n$-point tree metric with combinatorial depth $D$.
\end{theorem}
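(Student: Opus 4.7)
The plan is to implement a fractional MTS algorithm on the tree via recursive mirror descent with entropic regularizers at each internal node, and to analyze the resulting algorithm through a layered primal--dual potential.

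Algorithm: at each internal node $u$ with children $c_1,\dots,c_{k_u}$, I would maintain a conditional distribution $p^u_t \in \Delta_{k_u}$ and update it by mirror descent with a (shifted) negative-entropy regularizer $\Phi_u$ on the simplex, whose step size is scaled by the weight $w_u$ of the edge from $u$ to its parent. The full fractional state $p_t$ on leaves is obtained by cascading these conditionals along root-to-leaf paths. The cost vector driving the update at $u$ is defined recursively: coordinate $i$ is the expected service cost inside the subtree of child $c_i$ under the current conditional distribution there. An integer state $\rho_t$ is produced by sampling a root-to-leaf path from $p_t$, coupling consecutive samples so that the expected movement cost is bounded by the earthmover movement cost of the fractional trajectory (up to a constant factor).

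Analysis: I would work with a global potential $\Phi(p) = \sum_u w_u\, \Phi_u(p^u)$ and perform the standard mirror descent primal--dual accounting. At each time step $t$, I would derive a bound of the form
\[
\text{service cost} + \text{fractional movement cost} + \Delta\Phi \;\leq\; \sum_u w_u \bigl\langle \tilde c_t^u,\, p_t^u - q^u\bigr\rangle
\]
against any fixed comparator distribution $q$; summing over $t$ and taking $q$ to be concentrated on the offline-optimum trajectory would yield the competitive ratio. The entropic regularizer at any single simplex contributes an $O(\log n)$ term to the regret (since entropy is bounded by $\log k_u \leq \log n$), which is the source of the $\log n$ factor. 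Rounding to an integer online algorithm would cost only a constant factor by standard tree-rounding arguments.

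The main obstacle will be handling the ``unfair'' MTS behavior at each internal node on a general tree: the mirror descent at $u$ is driven by service costs aggregated over arbitrarily deep subtrees, whose magnitude is unrelated to the edge weight $w_u$ dictating the local movement price. On an HST the geometric separation of edge weights makes contributions decay level by level and the local analyses glue with only constant loss, yielding the $O(\log n)$ bound stated later in the paper. For an arbitrary tree this geometric structure is absent: a burst of service cost deep in the tree can force movement across a much heavier edge above, and the natural potential-based accounting has to absorb this mismatch by summing the single-level mirror descent bound over all $D$ levels. I plan to control the resulting cross-level interactions by a careful choice of the weights $w_u$ in the potential (interpolating between adjacent edge weights so that each level's contribution can be charged separately), and to pay a worst-case factor of $D$ in the final telescoping, which is precisely where the factor $D$ in the bound $O(D \log n)$ arises.
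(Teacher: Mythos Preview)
Your plan is structurally different from the paper's, and as written it has a real gap. The paper does \emph{not} run a separate mirror descent at each internal node on conditional distributions; it runs a \emph{single} global mirror descent on the polytope of vertex \emph{marginals} $x \in \R^V$ (with $x_u$ the mass at vertex $u$, subject to $x_{\root}=1$, $x_u \leq \sum_{v:p(v)=u} x_v$, $x_\ell \geq 0$), using one multiscale entropy $\Phi(x)=\tfrac{1}{\eta}\sum_{u} w_u (x_u+\delta_u)\log(x_u+\delta_u)$. The transportation cost is then exactly the weighted $\ell_1$ norm on marginals, so the standard one-shot Bregman inequality (Lemma~2.2) already gives $S \leq S^* + O(\log n)\, M^*/\eta$. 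The factor $D$ does not come from summing $D$ single-level regret bounds; it comes from a separate \emph{weighted depth potential} $\Psi(x)=\sum_u d_u w_u x_u$, whose time derivative absorbs the Lagrange multipliers on the parent-mass constraints and yields $\|(x')_+\|_{\ell_1(w)} \leq \partial_t \Psi + \eta D\,\langle c-\xi, x+\bm{\delta}\rangle$. That is the entire mechanism.

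Your recursive scheme on conditionals is much closer in spirit to the \emph{unfair-MTS gluing} of Section~4, which the paper deploys only for HSTs. The specific gap is twofold. First, the movement cost (Earthmover on leaves) equals $\sum_u w_u |x_u'|$ in \emph{marginals}; it is not $\sum_u w_u \|p^u - \tilde p^u\|_1$ in conditionals, and when conditionals change at several levels simultaneously the cross terms do not decouple, so the displayed inequality ``service $+$ movement $+\,\Delta\Phi \leq \sum_u w_u \langle \tilde c^u, p^u-q^u\rangle$'' does not follow from the per-node mirror descent lemma. Second, feeding node $u$ with aggregated subtree service costs $\tilde c^u$ and then comparing to the offline path requires relating $\tilde c^u$ to the offline cost restricted to that subtree; in the paper this step (Lemma~4.2) explicitly uses the HST separation $\mathrm{diam}(X_i)\leq w_i/(4\tau)$ to kill the $\mathrm{diam}$ additive terms, and without it the recursion does not close with only an $O(D)$ loss. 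Your proposed fix---``interpolating between adjacent edge weights'' in the potential---does not address either issue concretely. If you want to salvage the recursive route, you would at minimum need to drive each node by the subtree's \emph{total} (service plus movement) online cost and then redo a version of Lemma~4.2 without separation; but the cleaner path is the paper's: switch to marginals, use one global entropy, and introduce the depth potential.
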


The above result is obtained by an application of the mirror descent framework introduced in our prior work \cite{BCLLM18} on the $k$-server problem.
We obtain the following more precise bounds, which are referred to as {\em refined guarantees} (see, e.g., \cite[Thm. 4]{BBN10}).

For a randomized online algorithm $\bm{\rho}$ and a cost sequence $\bm{c}$, we denote respectively $S_{\bm{\rho}}(\bm{c})$ and $M_{\bm{\rho}}(\bm{c})$ for the (expected) service cost and movement cost, that is
\[
S_{\bm{\rho}}(\bm{c}) \seteq \E \sum_{t \geq 1} c_t(\rho_t) \quad \text{and}\quad M_{\bm{\rho}}(\bm{c}) \seteq \E \sum_{t\geq1}d(\rho_{t-1}, \rho_t) \,.
\]

\begin{theorem} \label{thm:refined}
Consider an $n$-point tree metric with combinatorial depth $D$. There is an online randomized algorithm $\bm{\rho}$ that achieves,
for any $\bm{c}$,
\[
S_{\bm{\rho}}(\bm{c}) \leq \cost^*(\bm{c})\,,
\]
and
\[
M_{\bm{\rho}}(\bm{c}) \leq O(D \log n) \ (\cost^*(\bm{c}) + \mathrm{diam}(X)) \,.
\]
\end{theorem}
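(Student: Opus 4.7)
The plan is to construct a fractional online algorithm $\bm{x} = (x_t)_{t\ge 0}$ maintaining $x_t$ in the probability simplex over the leaves of the tree, evolved by a continuous-time mirror descent with a tree-structured regularizer (as in the $k$-server framework of BCLLM18), and then round $\bm{x}$ to an integral trajectory at a constant-factor loss and an additive $O(\diam(X))$.

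The regularizer is a weighted sum of negative entropies, one per internal node. For each internal node $v$, let $\mu^v$ denote the conditional distribution over $v$'s children induced by $x$, and set
\[
   \Phi(x) = \sum_{v\text{ internal}} w_v\, H(\mu^v),
\]
where $H$ is a (shifted) negative entropy and $w_v$ scales with the diameter of the subtree rooted at $v$. The algorithm evolves $x_t$ by the projected mirror-descent flow
\[
   \dot{x}_t = -\bigl(\nabla^2 \Phi(x_t)\bigr)^{-1}\bigl(c_t - \lambda_t \mathbf{1}\bigr),
\]
where $\lambda_t$ is the Lagrange multiplier enforcing $\sum_i x_t(i) = 1$. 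The fractional movement rate at time $t$ is the weighted total-variation change of $x_t$ across the tree, which matches the natural transportation cost on the tree metric.

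The core estimate is a per-step Bregman inequality: for any state $y_t \in X$,
\[
   \langle c_t, x_t\rangle + \frac{1}{L}\bigl(\text{movement rate of }x_t\bigr) \le c_t(y_t) + \frac{\mathrm{d}}{\mathrm{d}t}\,D_\Phi(y_t \,\|\, x_t),
\]
with $L = O(D \log n)$: each level of the tree contributes an $O(\log n)$ factor via the entropy regularizer, and the $D$ levels compose additively. Integrating over $t$ and choosing $(y_t)$ to be the offline optimum, the service term appears on the right with coefficient $1$, yielding $S_{\bm{\rho}}(\bm{c}) \le \cost^*(\bm{c})$; the movement estimate then gives $M_{\bm{\rho}}(\bm{c}) \le L\bigl(\cost^*(\bm{c}) + \diam(X)\bigr)$, where the additive $\diam(X)$ arises from telescoping the boundary values of $D_\Phi$. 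Finally, $\bm{x}$ is rounded to an integral algorithm via a consistent hierarchical coupling of the conditional distributions $\mu^v_t$, which preserves both cost types up to constants.

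The main obstacle is pinning the service-cost coefficient to exactly $1$ in the Bregman inequality, rather than absorbing it into the competitive factor: this requires calibrating the mirror flow so that the service contribution and the Lagrange multiplier $\lambda_t$ decouple cleanly, pushing all $\log n$ and $D$ dependence onto the movement side. A secondary technical point is that the integral rounding must track the fractional movement with only an $O(\diam(X))$ boundary discrepancy; this is handled by maintaining a coupling whose ``jumps'' are charged to the Bregman potential already being telescoped.
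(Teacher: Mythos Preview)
Your proposal has the right architecture at a high level (continuous-time mirror descent with a tree-adapted entropy, then Bregman telescoping against the offline path), but there is a real gap in the movement-cost analysis, and your explanation of where the $D$ factor comes from does not match what actually happens.

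The central claim is the combined inequality
\[
   \langle c_t, x_t\rangle + \tfrac{1}{L}\,(\text{movement rate of }x_t) \le c_t(y_t) + \tfrac{\mathrm{d}}{\mathrm{d}t}\,D_\Phi(y_t \,\|\, x_t).
\]
The standard Bregman calculation only gives the service part, $\langle c_t, x_t\rangle \le \langle c_t, y_t\rangle - \partial_t D_\Phi(y_t; x_t)$ (plus a Lipschitz term when $y_t$ moves). There is no mechanism in what you wrote that produces the extra $\tfrac{1}{L}\|x_t'\|$ on the left; in particular, strong convexity of $\Phi$ in the movement norm does not yield this, because $x'_t$ is governed by the \emph{inverse} Hessian and the cost, not by the Bregman divergence to $y_t$. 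In the paper, the movement is bounded by an entirely separate device: an auxiliary \emph{weighted depth potential} $\Psi(x)=\sum_{u} d_u w_u x_u$, where $d_u$ is combinatorial depth. Differentiating $\Psi$ along the dynamics converts the internal Lagrange multipliers (the ``upward push'' at each internal node) into a telescoping term plus $\eta D$ times the service cost; this is precisely where the factor $D$ enters. Your sketch has no analog of $\Psi$, and the assertion that ``the $D$ levels compose additively'' via the regularizer is not how the bound arises: in the paper's setup the Lipschitz constant of $\Phi$ is $O((\log n)/\eta)$ \emph{independent of} $D$, and the depth shows up only through $\Psi$ in the movement analysis.

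Two secondary mismatches are worth flagging. First, the paper's regularizer is not a sum of conditional entropies at internal nodes; it is a single shifted weighted entropy $\Phi(x)=\tfrac{1}{\eta}\sum_{u\in V} w_u (x_u+\delta_u)\log(x_u+\delta_u)$ on the \emph{extended} state space $\R^V$ (all tree vertices, not just leaves), with the polytope $\mathsf{K}$ encoding the flow constraints $x_u \le \sum_{v:p(v)=u} x_v$. The interplay between these constraints and their multipliers $\hat\lambda_u$ is what $\Psi$ is designed to capture. Second, the $\diam(X)$ additive term comes not from rounding but already from the fractional analysis (the range of $\Psi$ and the reduced-cost lemma applied with $\bm{\delta}$); the passage to an integral algorithm is handled upstream by the continuous-to-discrete reduction, without a separate hierarchical coupling argument.
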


For $n$-point HST metrics, Fiat and Mendel \cite{FM03} achieve an $O((\log n)\log \log n)$ competitive ratio,
improving on the $O((\log n)^2)$-competitive algorithm for $\tau$-HSTs with $\tau \geq \Omega((\log n)^2)$ \cite{BBBT97}.
Since one can assume that $D \leq O(\log n)$ for an $n$-point HST metric (see \cite{BBMN15}),
the mirror descent framework yields an arguably simpler $O((\log n)^2)$-competitive algorithm
for arbitrary HSTs that, moreover,
satisfies the refined guarantees of \pref{thm:refined}.

\paragraph{Unfair metrical task systems}
The algorithms in \cite{BBBT97, FM03} are based on the recursive combination of {\em unfair metrical task systems}, introduced by \cite{Sei99}.
Roughly speaking, one is given an {\em unfairness ratio} $r_x \geq 1$ for every point $x \in X$, and the online algorithm
is charged a service cost of $r_x c_t(x)$ for playing $x \in X$ at time $t$, while the offline algorithm is only charged $c_t(x)$.
Competitive algorithms for unfair task systems are useful in constructing algorithms
for HSTs, where $r_x$ is a proxy for the competitive ratio of an algorithm
on MTS instances defined in a subtree rooted at $x$.

In pursuing this strategy,
Fiat and Mendel \cite{FM03} employ
two different combining
algorithms that can be roughly described as follows:
\begin{enumerate}[label=\textbf{A\arabic*}]
   \item \cite{BBBT97}
      If the unfairness ratios are $\{r_x : x \in X \}$ and $(X,d)$ is an $n$-point uniform metric,
      then one obtains a competitive ratio of $O(\log n) + \max \{ r_x : x \in X \}$. \label{item:alguniform}
   \item \cite{BBBT97,Sei99,BKRS00} If $X=\{x_1,x_2\}$ has $d(x_1,x_2)=1$
      and the unfairness ratios are $r_1, r_2 \geq 1$, then
      one obtains a competitive ratio of
      \[
         r \seteq r_1 + \frac{r_1-r_2}{e^{r_1-r_2}-1}\,.
      \]
      One can observe the following property:  If $r_1 \leq 2 (1+\ln y_1)$ and $r_2 \leq 2(1+\ln y_2)$, then
      $r \leq 2 (1+\ln(y_1+y_2))$. \label{item:algtwopoint}
\end{enumerate}

Our second contribution is to refine this approach using the mirror descent framework.
This allows us to obtain an optimal $O(\log n)$ competitive ratio for MTS
on an arbitrary HST.\footnote{One should recall that this is optimal, up to a universal constant factor,
among all HST metrics, but it is an open problem to establish a lower bound of $\Omega(\log n)$
for {\em every} $n$-point HST metric.}

\begin{theorem} \label{thm:HST}
   There is an $O(\log n)$-competitive randomized
   algorithm for metrical task systems on $n$-point HST metrics.
\end{theorem}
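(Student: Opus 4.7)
The plan is to construct the algorithm recursively on the HST, combining unfair MTS algorithms at each internal node in a way that preserves a logarithmic competitive ratio. After reducing to a $\tau$-HST for a sufficiently large constant $\tau$ (which costs only a constant factor), I treat each internal node $u$ as the state space of a weighted-uniform-metric MTS instance on its children $v_1,\ldots,v_k$ (in an HST all children of $u$ lie at the same distance from $u$). Assuming recursively that the subtree rooted at $v_i$ admits an MTS algorithm with competitive ratio $r_{v_i}$, I combine them by running an \emph{unfair} MTS algorithm on the $k$-point uniform metric at $u$, using unfairness ratio $r_{v_i}$ at child $v_i$: whenever the master algorithm lingers at $v_i$, the recursive subroutine handles the incoming cost inside that subtree, and its guarantee ensures the composite service cost is only $r_{v_i}$ times the offline service cost inside the subtree.

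The technical core will be an \emph{unfair gluing lemma}: there is a mirror-descent-based algorithm for unfair MTS on a $k$-point uniform metric with unfairness ratios $r_1,\ldots,r_k \geq 1$ whose competitive ratio $r$ satisfies the composition property that, if $r_i \leq C(1+\ln y_i)$ for positive reals $y_1,\ldots,y_k$ and some universal constant $C$, then $r \leq C\bigl(1+\ln \sum_{i=1}^k y_i\bigr)$. This is a clean multi-point generalization of property \pref{item:algtwopoint}, and it is precisely what avoids the extraneous $\log\log n$ factor incurred by \cite{FM03}.

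Given the gluing lemma, \pref{thm:HST} follows by induction on the HST. At leaves I take $y_x = 1$ and $r_x = C$, satisfying the inductive invariant trivially. At an internal node $u$ with children $v_1,\ldots,v_k$, the lemma applied with $y_i = y_{v_i}$ and $r_i = r_{v_i}$ yields an algorithm on the subtree rooted at $u$ with competitive ratio $r_u \leq C(1+\ln y_u)$, where $y_u \seteq \sum_i y_{v_i}$ is the number of leaves below $u$. At the root, $y_u = n$, giving the claimed $O(\log n)$ bound. Cleanly passing costs between levels uses the refined-guarantee viewpoint of \pref{thm:refined} (service versus movement), which the mirror-descent framework supplies naturally.

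The main obstacle is the gluing lemma itself, with its tight logarithmic combination rule. I would run mirror descent on the simplex of distributions over the children $\{v_1,\ldots,v_k\}$ with a \emph{weighted} negative-entropy regularizer of the form $\Phi(x) = \sum_i w_i(x_i \ln x_i - x_i)$, where the weights $w_i$ align the geometry of the regularizer with the unfairness structure --- I expect a form like $w_i \propto e^{r_i/C}$, so that $w_i \propto y_i$ in the tight regime $r_i = C(1+\ln y_i)$. The analysis proceeds by a primal-dual potential argument bounding service and movement separately, and the crux is showing that the constant in front of the offline cost obeys the desired log-sum combination; this is the multi-point analogue of the two-point identity $r = r_1 + (r_1-r_2)/(e^{r_1-r_2}-1)$, and is precisely where the refined mirror-descent viewpoint of \cite{BCLLM18} gains over the work-function analyses underlying \cite{FM03}.
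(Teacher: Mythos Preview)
Your high-level architecture matches the paper exactly: recursive gluing via unfair MTS on a weighted star at each internal node, with mirror descent using a non-uniform entropic regularizer, and an inductive invariant of the form $r_u \leq C(1+\ln y_u)$ with $y_u$ the number of leaves below $u$. Where your sketch diverges from what actually makes the analysis close is the choice of regularizer. The paper does \emph{not} weight the entropy by $w_i \propto e^{r_i/C}$; instead it uses per-coordinate learning rates $\eta_i$ in a \emph{shifted} entropy $\Phi(x)=\sum_i \tfrac{w_i}{\eta_i}(x_i+\delta_i)\log(x_i+\delta_i)$, and the decisive trick is to choose $\eta_i$ so that $\beta_i + 2\gamma\eta_i = \zeta$ is \emph{constant} across $i$ (here $\beta_i$ is the unfairness ratio at child $i$ and $\gamma$ the movement multiplier). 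This single constraint makes the unfair service cost plus $2\gamma$ times movement collapse to $\zeta$ times the \emph{fair} service cost, which is then controlled by \pref{lem:basic}. Under the parameterization $\beta_i = 8\gamma(\log u_i + C)$ one gets $\eta_i = 4\log(U/u_i)$ --- an affine, not exponential, dependence on $\beta_i$. Your guess $w_i \propto e^{r_i/C}$ is the wrong functional form and would not yield the log-sum combination.

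Two further points your plan omits. First, the shifts $\delta_i$ are essential (without them $\nabla\Phi$ is not Lipschitz on the simplex and neither \pref{lem:basic} nor the reduced-cost bound of \pref{lem:reas} applies), and they too are non-uniform: the paper takes $\delta_i = (u_i/U)^2$, which is what keeps $\langle \bm{\eta},\bm{\delta}\rangle \leq 1$ and $L = \max_i 2\log(1/\delta_i)/\eta_i = 1$ simultaneously. Second, the actual recursion loses an additive $+1$ per level: the gluing theorem (\pref{thm:combi}) yields $\log U + C + 1$ from hypotheses $\log u_i + C$, so after depth $D$ the competitive ratio is $O(\log n + D)$, and a preliminary reduction to an $8$-HST of combinatorial depth $O(\log n)$ is needed before the induction. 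Your stated recursion, which preserves the constant $C$ exactly, is slightly too optimistic.
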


Combined with known HST embedding theorems \cite{Bar96,FRT04},
this yields an $O((\log n)^2)$-competitive randomized algorithm for any $n$-point metric space,
improving the state of the art.

\pref{thm:HST} is proved in \pref{sec:unfair} by presenting a combining algorithm that
correctly
interpolates between the behavior of algorithms \ref{item:alguniform} and \ref{item:algtwopoint} described above:
If $(X,d)$ is a uniform metric with unfairness ratios $\{r_x : x \in X \}$, then 
for some universal constant $C > 1$,
we obtain the ``smooth maximum'' competitive ratio $O(1) \log \left(\sum_{x \in X} \exp(C r_x)\right).$
The combining algorithm fits naturally into the mirror descent framework
by assigning different ``learning rates'' to each piece of the space
based on the corresponding unfairness ratio.

\section{MTS and mirror descent}

We first develop the mirror descent framework
in the context of metrical task systems.
These general principles
will then be applied in \pref{sec:regularization} and \pref{sec:unfair}.

\subsection{Randomized algorithms} Let $\Delta(X)$ be the set of probability measures supported on $X$, and denote by $\vvW_X^1(\mu,\nu)$ the Earthmover distance (a.k.a., the $L^1$ transportation distance) between $\mu, \nu \in \Delta(X)$.  In other words,
$\vvW_X^1(\mu,\nu) = \inf \E d(Y,Z)$, where the infimum is over all random variables $(Y,Z)$ such that $Y$ has law $\mu$ and $Z$ has law $\nu$.

A random state $\rho_t \in X$ is completely specified by its (deterministic) probability distribution $p_t \in \Delta(X)$. Moreover for any deterministic sequence $p_1,\ldots, p_t$ there exists an adapted sequence $\rho_1, \ldots \rho_t$ such that
\[
\E \sum_{t=1}^T d(\rho_{t-1}, \rho_t) =\sum_{t=1}^T  \vvW_X^1(p_{t-1}, p_t) \,,
\]
where $p_0$ is the probability distribution concentrated at $\rho_0 \in X$.
In particular, we see that a randomized online algorithm on $(X,d)$ for the input sequence $\bm{c}$ is equivalently
described by a deterministic online algorithm on the
metric space $(\Delta(X), \vvW_X^1)$ with the cost functions $c_t$ extended linearly from $X$ to $\Delta(X)$.

\subsection{Continuous-time model}
Rather than the discrete time model of the introduction we will work in a continuous model, where $t \in \Z_+$ is replaced by $t \in \R_+$ and discrete sums are replaced by integrals. More precisely, an online algorithm now maps, for any $T \in \R_+$, a continuous path $(c(t))_{t \in [0,T]}$ of cost functions $c(t) : X \rightarrow \R_+$ to a (random) state $\rho(T) \in X$. Denote by
$p(T)$ the law of $\rho(T)$.
The corresponding total expected service cost is defined to be (we omit the dependency on the algorithm and the costs)
\[ 
S \seteq \E \int_{\R_+} c(t)(\rho(t))\,dt = \int_{\R_+} \langle c(t), p(t) \rangle\,dt,
\] and the movement cost is (with the notation $\rho(t^-) = \lim_{s \rightarrow t, s<t} \rho(s)$):
\[
M \seteq \E \sum_{t : \rho(t^-) \neq \rho(t)} d(\rho(t^-), \rho(t)) = \int_{\R_+} \lim_{h \rightarrow 0^-} \frac{\vvW_X^1(p(t+h), p(t))}{|h|} dt.
\]
%Similarly, we use $S^*$ and $M^*$ denote respectively the service cost and movement cost of {\em some} continuous-time offline optimum. 
%(Note that these quantities are not uniquely determined, as the total objective value is $S^*+M^*$, but the decomposition
%into service and movement costs could differ for various optimal offline algorithms.)
The following result is folklore.

\begin{lemma} \label{lem:distocont}
   The existence of an $\alpha$-competitive algorithm for the continuous-time model (with piecewise continuous costs)
   implies the existence of a $\alpha$-competitive algorithm for the discrete-time model.
\end{lemma}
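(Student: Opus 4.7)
The plan is to encode the discrete cost sequence as a piecewise-constant continuous cost path, run the continuous-time algorithm on it, and then extract a discrete-time online algorithm by time-averaging the resulting distribution path. Service costs will match exactly by Fubini, and the main obstacle will be controlling the discrete movement cost by the continuous one; this rests on the convexity of the Earthmover distance together with a small tiling observation.

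Concretely, given $\bm{c} = (c_1, c_2, \ldots)$, set $c(s) := c_{\lceil s \rceil}$ for $s > 0$, a piecewise-constant and hence piecewise continuous cost path. A first easy step is to verify that the continuous offline optimum on $c$ is at most $\cost^*(\bm{c})$: any discrete trajectory $(\rho_t)_{t \geq 0}$ lifts to a continuous trajectory that jumps from $\rho_{t-1}$ to $\rho_t$ at time $t-1$ and stays put on $[t-1, t)$, producing continuous service cost $\int_{t-1}^t c_t(\rho_t)\,ds = c_t(\rho_t)$ and jump cost $d(\rho_{t-1}, \rho_t)$ that together reproduce the discrete cost exactly.

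Next, feed $c$ to the assumed $\alpha$-competitive continuous-time algorithm, obtaining a distribution path $p(\cdot)$, and define
\[
   p_t := \int_{t-1}^{t} p(s)\,ds \quad (t \geq 1), \qquad p_0 := \delta_{\rho_0}.
\]
Realize the discrete random states by sampling each $\rho_t \sim p_t$ through an optimal transport coupling between $p_{t-1}$ and $p_t$, so that $\E\, d(\rho_{t-1},\rho_t) = \vvW_X^1(p_{t-1}, p_t)$. Since $p(\cdot)|_{[0,t]}$ depends only on $c_1, \ldots, c_t$, this is a legitimate online algorithm, and Fubini gives $\sum_{t\geq 1} \iprod{c_t, p_t} = \int_0^\infty \iprod{c(s), p(s)}\,ds$, so the discrete expected service cost equals the continuous service cost exactly.

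The hard part is to show $\sum_{t \geq 1} \vvW_X^1(p_{t-1}, p_t) \leq M$, where $M$ is the continuous movement cost. By convexity of $\vvW_X^1$ in each argument,
\[
   \vvW_X^1(p_{t-1}, p_t) \leq \int_0^1 \vvW_X^1\bigl(p(t-2+r), p(t-1+r)\bigr)\,dr \qquad (t \geq 2),
\]
and each term on the right is at most the $\vvW_X^1$-length of $p(\cdot)$ over the window $[t-2+r, t-1+r]$, whose integral over all of $\R_+$ equals $M$. Swapping the sum over $t$ with the integral over the time axis, the weight assigned to each $u \geq 0$ becomes $\sum_t \int_0^1 \Ind[u \in [t-2+r, t-1+r]]\,dr$; the key observation is that the unit intervals $\{[u-t+1, u-t+2]\}_{t \in \Z}$ tile $\R$, so this weight equals exactly $1$ once the $t = 1$ boundary case, handled via $p_0 = p(0)$ rather than the averaging formula, is combined with the $t = 2$ contribution. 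Together with the service-cost identity and the offline inequality on $c$, this movement bound yields the desired $\alpha$-competitive discrete algorithm.
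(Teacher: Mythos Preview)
Your proof is correct, but it takes a genuinely different route from the paper's. One small correction: the inequality $\vvW_X^1(p_{t-1},p_t)\le\int_0^1 \vvW_X^1(p(t-2+r),p(t-1+r))\,dr$ requires \emph{joint} convexity of $\vvW_X^1$, not merely convexity in each argument separately (the latter would only give $\int_0^1\int_0^1 \vvW_X^1(p(t-2+r),p(t-1+s))\,dr\,ds$). Joint convexity is of course standard via averaging couplings. Your tiling bookkeeping is right; it reads more cleanly if you phrase the $t=1$ term as $\vvW_X^1(p(0),p_1)\le\int_0^1 L(0,r)\,dr$ with $L(a,b)$ the $\vvW_X^1$-length of $p$ over $[a,b]$, so that for fixed $r$ the intervals $[0,r],[r,1+r],[1+r,2+r],\ldots$ partition $[0,\infty)$ and the sum is exactly $M$.

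The paper argues quite differently. Rather than piecewise-constant costs, it encodes each discrete cost $C$ as a ``waterfilling'' path $c(t)(x)=\1\{C(x)\ge t\}$ on $[0,\max_x C(x)]$, runs the continuous algorithm on this, and reasons \emph{pathwise} on the random trajectory $\rho(\cdot)$: it sets $s=\argmin_t C(\rho(t))$ and has the discrete algorithm move to $\rho(s)$, pay $C(\rho(s))$, and then move to the endpoint $\rho(T)$. Since $\int_0^T \1\{C(\rho(t))\ge t\}\,dt\ge C(\rho(s))$ and the path length dominates the two-hop distance by the triangle inequality, the discrete step cost is bounded by the continuous cost, step by step. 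What each buys: the paper's argument avoids any appeal to convexity of $\vvW_X^1$ and stays at the trajectory level; your argument stays at the distribution level consistent with the rest of the framework, yields an exact identity on the service cost rather than an inequality, and the convexity-plus-tiling movement bound is clean and reusable. Both preserve the refined service/movement guarantees.
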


\begin{proof}
Let us describe an update procedure upon receiving a discrete time cost function $C$ in state $\rho_0$. Denote $T \seteq \max_{x \in X} C(x)$. Let $(c(t))_{t \in [0,T]}$ be a ``waterfilling" continuous time version of $C$, that is $c(t)(x) = \1\{C(x) \geq t\}$. Let $(\rho(t))_{t \in [0,T]}$ be the $\alpha$-competitive continuous-time algorithm path on this cost function path, starting from state $\rho(0) = \rho_0$. Let $s \seteq \argmin_{t \in [0,T]} C(\rho(t))$, and notice that by definition of the cost path one has $\int_{0}^T c(t)(\rho(t)) dt \geq C(\rho(s))$.

Furthermore one also has that the movement of the continuous-time algorithm is at least
$d(\rho(0), \rho(s)) + d(\rho(s), \rho(T))$. Thus we see that the discrete-time algorithm can simply update to $\rho(s)$, pay the service cost there, and then move to $\rho(T)$. The total cost of this discrete-time update is smaller than the total cost of the continuous-time update, and furthermore both algorithms end up in the same state so that one can repeat the argument for the next discrete-time cost function. On the other hand, the cost of the offline optimum in the continuous-time model with cost $c$ is clearly smaller than in the discrete-time model with cost $C$ (simply because the continuous-time cost of the offline discrete time optimal path is equal to its discrete-time cost). This concludes the proof.
\end{proof}

Note that, in fact, the above proof shows that one can preserve the {\em refined guarantees}
(see \pref{thm:refined}) from the continuous-time model to the discrete-time model.

\subsection{State representation}

For $x = (x_1,\hdots,x_N)$ with $N \geq n$, denote $P_n(x) = (x_1,\hdots, x_n)$. Let $\mathsf{K}  \subset \R^N$ be a convex body such that
\begin{equation} \label{eq:slice}
\left\{P_n(x) : x \in \mathsf{K}  \right\} = \left\{p \in \R_+^n : \sum_{i=1}^n p_i =1 \right\}.
\end{equation}
We will associate the latter set with $\Delta(X)$ by taking
$X = \{1,2,\ldots,n\}$.

We will also assume that there exists a norm $\|\cdot\|$ on $\R^N$ such that
\begin{equation}\label{eq:Wnorm}
\vvW_X^1(P_n(x), P_n(y)) = \|x-y\| \,.
\end{equation}
This assumption is specific to the setting of tree metrics.

Thus instead of a randomized online algorithm on $(X,d)$ against costs functions $c_t~:~X~\rightarrow~\R_+$, we will specify a deterministic online algorithm on $(\mathsf{K},\|\cdot\|)$ against linear cost functions $(c_t(1),\hdots, c_t(n), 0, \hdots) \in \R_+^N$. (With a slight abuse of notation, we will use $c_t$ for this cost function.)

\subsection{Mirror descent dynamics} Let $\Phi : \mathsf{K} \to \R$
be a strictly convex function. Denote by $N_{\mathsf{K}}(x) = \{\theta \in \R^N : \theta \cdot (y-x) \leq 0, \ \forall y \in \mathsf{K}\}$ 
the normal cone of $\mathsf{K}$ at $x$.
In our recent work \cite{BCLLM18} on the $k$-server problem, we considered the following dynamics to respond to a continuous time linear cost $(c(t))_{t \in \R_+}$ starting in some state $x_0 \in \mathsf{K}$:
\begin{align}
   \nabla^2 \Phi(x(t)) x'(t) &= - (c(t) + \lambda(t)), \, \lambda(t) \in N_\mathsf{K}(x(t))\label{eq:inclusion}\\
   x(0) &= x_0\nonumber \\
   x(t) &\in \mathsf{K} \qquad \forall t \geq 0\,.\nonumber
\end{align}
\cite[Thm. 2.1]{BCLLM18} shows that under mild regularity assumptions (which will be satisfied here), the above differential inclusion admits a unique and absolutely continuous solution. Absolute continuity implies
(see, e.g., \cite[Lem. 3.45]{Leo09}) that for almost every $t \in \R_+$,
\begin{equation} \label{itwastrue}
x_i(t) = 0 \Rightarrow x_i'(t) = 0 \,.
\end{equation}
Futhermore, if $\mathsf{K}$ is a polyhedron given by
\begin{equation}\label{eq:polyhedron}
   \mathsf{K} = \left\{ x \in \R^N : A x \leq b \right \}, \qquad A \in \R^{m \times N}, \ b \in \R^m,
\end{equation}
then, there is a measurable $\hat{\lambda} : [0,\infty) \to \R^m$ such that $\hat{\lambda}$ represents the normal force $\lambda$
\begin{equation}\label{eq:multiplier}
   A^\top \hat{\lambda}(t) = \lambda(t)\,,\quad t \geq 0\,
\end{equation}
and $\hat{\lambda}$ satisfies the
complementary-slackness conditions:
For all $i=1,2,\ldots,m$ and almost all $t \geq 0$:
\begin{equation}\label{eq:cslack}
   \hat{\lambda}_i(t) > 0 \implies \langle A_i, x(t)\rangle = b_i\,,
\end{equation}
where $A_i$ is the $i$th row of $A$.
We will fix such a representation $\hat{\lambda}$
and call $\hat{\lambda}_i$ the Lagrangian multiplier
of the constraint $\langle A_i, x\rangle \leq b_i$.

\subsection{Cost of the algorithm}

Recall that the Bregman divergence associated to $\Phi$ is defined by
\[
   {D}_{\Phi}(y ; x) \seteq \Phi(y) - \Phi(x) - \langle \nabla \Phi(x), y -x \rangle \geq 0\,,
\]
where the latter inequality follows from convexity of $\Phi$.
Finally, we
denote 
\[
   \mathrm{Lip}_{\|\cdot\|}(\Phi) \seteq \sup_{x, y \in \mathsf{K}} \|\nabla \Phi(x) - \nabla \Phi(y)\|^*\,,
\]
where $\|\cdot\|^*$ is the dual norm to $\|\cdot\|$ on $\R^N$.

\begin{lemma} \label{lem:basic}
The mirror descent path \eqref{eq:inclusion} satisfies, for any absolutely
continuous comparator path $(y(t))_{t \geq 0}$ in $\mathsf{K}$,
\[
\int_{\R_+} \langle c(t), x(t) \rangle\,dt \leq \int_{\R_+} \langle c(t), y(t)\rangle\, dt + \mathrm{Lip}_{\|\cdot\|}(\Phi) \int_{\R_+} \|y'(t)\|\,dt
+ \mathrm{Lip}_{\|\cdot\|}(\Phi) \cdot \|y(0) - x(0)\| \,.
\]
\end{lemma}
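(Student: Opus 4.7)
My plan is to run the standard mirror-descent regret argument in continuous time, using the Bregman divergence $D_\Phi(y(t);x(t))$ as a potential function against the comparator path.

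\medskip

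\noindent\textbf{Step 1: Differentiate the potential.} I would compute the time derivative of $D_\Phi(y(t);x(t)) = \Phi(y(t)) - \Phi(x(t)) - \langle \nabla\Phi(x(t)), y(t)-x(t)\rangle$ along the two absolutely continuous paths. The $\Phi(y)$ and $\Phi(x)$ chain-rule terms combine with the $\nabla\Phi(x)\cdot x'$ piece so that
\[
\frac{d}{dt} D_\Phi(y(t);x(t)) = \langle \nabla\Phi(y(t))-\nabla\Phi(x(t)), y'(t)\rangle - \langle \nabla^2\Phi(x(t))\, x'(t),\, y(t)-x(t)\rangle,
\]
which holds for almost every $t$ by absolute continuity.

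\medskip

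\noindent\textbf{Step 2: Substitute the dynamics.} Using \eqref{eq:inclusion}, $-\nabla^2\Phi(x(t))\,x'(t) = c(t) + \lambda(t)$, so the second term becomes $\langle c(t)+\lambda(t), y(t)-x(t)\rangle$. Since $\lambda(t) \in N_{\mathsf{K}}(x(t))$ and $y(t)\in\mathsf{K}$, the normal cone inequality gives $\langle \lambda(t), y(t)-x(t)\rangle \leq 0$. Rearranging yields
\[
\langle c(t), x(t)-y(t)\rangle \leq -\frac{d}{dt} D_\Phi(y(t);x(t)) + \langle \nabla\Phi(y(t))-\nabla\Phi(x(t)),\, y'(t)\rangle.
\]

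\medskip

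\noindent\textbf{Step 3: Integrate and bound.} Integrating over $\R_+$, the potential telescopes to $D_\Phi(y(0);x(0)) - \lim_{T\to\infty} D_\Phi(y(T);x(T))$, and the limit is nonnegative by convexity of $\Phi$, so it can be dropped. For the remaining integrand, Hölder's inequality in the $\|\cdot\|/\|\cdot\|^*$ pairing gives
\[
\langle \nabla\Phi(y(t))-\nabla\Phi(x(t)), y'(t)\rangle \leq \mathrm{Lip}_{\|\cdot\|}(\Phi)\cdot \|y'(t)\|.
\]
For the initial potential, I would use the integral representation $D_\Phi(y(0);x(0)) = \int_0^1 \langle \nabla\Phi(x(0)+s(y(0)-x(0))) - \nabla\Phi(x(0)), y(0)-x(0)\rangle\,ds$, which is bounded by $\mathrm{Lip}_{\|\cdot\|}(\Phi)\cdot\|y(0)-x(0)\|$ by the same duality. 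Combining the three bounds gives the stated inequality.

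\medskip

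\noindent\textbf{Main obstacle.} The calculation is essentially mechanical once one trusts absolute continuity and \pref{itwastrue}. The only delicate issue is justifying the chain rule for $\Phi$ and $\nabla\Phi$ along the two paths when $\Phi$ may blow up on $\partial\mathsf{K}$ (as it typically will for a regularizer on the simplex). Here one uses that $x(t)$ and $y(t)$ stay in $\mathsf{K}$ and that whenever a coordinate $x_i(t)$ vanishes the corresponding $x'_i(t)$ also vanishes, so the dangerous contributions in $\nabla^2\Phi(x(t))\,x'(t)$ cancel. Once this bookkeeping is handled, the three-line computation above is the whole argument.
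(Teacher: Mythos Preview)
Your proposal is correct and follows essentially the same approach as the paper: both differentiate the Bregman potential $D_\Phi(y(t);x(t))$, substitute the dynamics \eqref{eq:inclusion}, use the normal-cone inequality, integrate, and drop the nonnegative terminal potential. The only cosmetic differences are that the paper splits the derivative into the two partial contributions $\partial_{x(t)}$ and $\partial_{y(t)}$ before recombining, and bounds $D_\Phi(y(0);x(0))$ via the convexity inequality $D_\Phi(y;x) \leq \langle \nabla\Phi(y)-\nabla\Phi(x), y-x\rangle$ rather than your integral representation---both yield the same $\mathrm{Lip}_{\|\cdot\|}(\Phi)\cdot\|y(0)-x(0)\|$ bound.
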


\begin{proof}
For any fixed $y \in \mathsf{K}$ one has:
\begin{align}
   \partial_t {D}_{\Phi}(y ; x(t)) &= - \langle \nabla^2 \Phi(x(t)) x'(t), y - x(t)\rangle \label{eq:bregman-deriv}\\
& = \langle c(t) + \lambda(t), y - x(t)\rangle \nonumber \\
& \leq  \langle c(t), y - x(t)\rangle\,, \nonumber
\end{align}
where the inequality follows from $\lambda(t) \in N_\mathsf{K}(x(t))$ and $y \in \mathsf{K}$.
Furthermore, for any fixed $x \in \mathsf{K}$, one has:
\begin{align}
\partial_t {D}_{\Phi}(y(t) ; x) & = \llangle \nabla \Phi(y(t)) - \nabla \Phi(x)), y'(t)\rrangle \nonumber \\
                                & \leq \mathrm{Lip}_{\|\cdot\|}(\Phi) \cdot \|y'(t)\| \,. \label{eq:lipbound}
\end{align}

Combining both inequalities, for any time $T$, we have
\begin{align}
D_{\Phi}(y(T);x(T))-D_{\Phi}(y(0);x(0))&=\int_{0}^{T}\partial_{x(t)}D_{\Phi}(y(t);x(t))+\partial_{y(t)}D_{\Phi}(y(t);x(t))\,dt \nonumber\\
&\leq\int_{0}^{T} \langle c(t), y(t)-x(t)\rangle +\mathrm{Lip}_{\|\cdot\|}(\Phi)\cdot \|y'(t)\|\,dt. \label{eq:basic1}
\end{align}
To bound the left hand side, we note that by convexity of $\Phi$,
\begin{eqnarray*}
   D_{\Phi}(y(0);x(0)) &	\leq & \langle \nabla \Phi(y(0)) - \nabla \Phi(x(0)) , y(0) - x(0) \rangle \\
   %\left|\Phi(y(0))-\Phi(x(0))\right|+\left|\langle \nabla\Phi(x(0)),y(0)-x(0)\rangle\right| \\
	& \leq & \mathrm{Lip}_{\|\cdot\|}(\Phi)\cdot\|y(0)-x(0)\|\,,
\end{eqnarray*}
and $D_{\Phi}(y(T);x(T))\geq0$. Putting these into  (\ref{eq:basic1}) gives that
\begin{eqnarray*}
-  \mathrm{Lip}_{\|\cdot\|}(\Phi)\times\|y(0)-x(0)\|  \leq\int_{0}^{T} \langle c(t), y(t)-x(t)\rangle+\mathrm{Lip}_{\|\cdot\|}(\Phi)\cdot\|y'(t)\|\,dt.
\end{eqnarray*}
The result follows by taking $T \to +\infty$.
\end{proof}

\subsection{Reduced costs}

Consider a polyhedron
of the form $\mathsf{K} = \{x \in \R^N : A x \leq b \, \text{and} \, x_i \geq 0, \forall i \in [n]\}$. Then \eqref{eq:multiplier}
shows that the normal force is given by $\lambda(t) = A^\top \hat{\lambda}(t) - \xi(t)$ where $\xi(t) > 0$ is a Lagrange multiplier of the constraints $x_i \geq 0, i \in [n]$. We refer to the quantity
$c(t) - \xi(t)$ as a {\em reduced cost}.
Intuitively, the reduced cost is the ``effective'' cost for the algorithm's dynamics:
Observe that $\langle x(t), c(t)-\xi(t)\rangle = \langle x(t),c(t)\rangle$ since $\xi_i(t) > 0 \implies x_i(t)=0$.
Although reduced costs are not unique, the following lemma shows that any reduced cost is bounded in a sense we now describe.
Certain greek letters (e.g., $\beta,\eta,\delta$) will sometimes represent scalar parameters,
and sometimes vectors of parameters.  In the latter case, we use the bold versions (e.g., $\bm{\beta},\bm{\eta},\bm{\delta}$).

\begin{lemma} \label{lem:reas}
   Assume that $\mathsf{K} = \{x \in \R_+^N : A x \leq b\}$, for some $A \in \R^{m \times N}, b \in \R^m$. Let  $x : \R_+ \rightarrow \mathsf{K}$ be the mirror descent path \eqref{eq:inclusion} for some regularizer $\Phi$. Let also $\bm{\delta} \in \mathsf{K}$ be such that $A \bm{\delta} = b$. 
Then one has:
\[
   \int_{\R_+} \langle c(t)-\xi(t),\bm{\delta}\rangle\, dt \leq \int_{\R_+} \langle c(t), x(t)\rangle\, dt + \mathrm{Lip}_{\|\cdot\|}(\Phi) \cdot \sup_{x \in \mathsf{K}} \|x - \bm{\delta}\|\,,
\]
where $\xi(t)$ is any Lagrange multiplier for the constraint $x\geq 0$.
\end{lemma}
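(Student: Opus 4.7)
The plan is to mimic the argument of Lemma \ref{lem:basic} with the constant comparator path $y(t) \equiv \bm{\delta}$, except that I will not discard the normal-force term; instead I will exploit the decomposition of $\lambda(t)$ from \eqref{eq:multiplier} together with the hypothesis $A\bm{\delta} = b$. The reduced-cost correction $c(t) - \xi(t)$ on the left-hand side should appear naturally from complementary slackness for the nonnegativity constraints $x_i \geq 0$.

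The key step is to show that for a.e.\ $t$,
\[
\langle \lambda(t), \bm{\delta} - x(t)\rangle \geq -\langle \xi(t), \bm{\delta}\rangle\,.
\]
Writing $\lambda(t) = A^\top \hat{\lambda}(t) - \xi(t)$, the $A^\top \hat{\lambda}(t)$ piece contributes $\langle \hat{\lambda}(t), A\bm{\delta} - A x(t)\rangle = \langle \hat{\lambda}(t), b - A x(t)\rangle \geq 0$, since $\hat{\lambda}(t) \geq 0$ and $A x(t) \leq b$. The $-\xi(t)$ piece contributes $-\langle \xi(t), \bm{\delta}\rangle + \langle \xi(t), x(t)\rangle$, and the second summand vanishes by complementary slackness for the constraints $x_i \geq 0$ (i.e., $\xi_i(t) > 0 \Rightarrow x_i(t) = 0$, cf.\ \eqref{itwastrue}).

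Plugging this bound into the derivative identity \eqref{eq:bregman-deriv} with $y = \bm{\delta}$ gives
\[
\partial_t D_\Phi(\bm{\delta}; x(t)) = \langle c(t) + \lambda(t), \bm{\delta} - x(t)\rangle \geq \langle c(t) - \xi(t), \bm{\delta}\rangle - \langle c(t), x(t)\rangle\,.
\]
Integrating on $[0,T]$, dropping the nonnegative term $D_\Phi(\bm{\delta}; x(0))$, and bounding
\[
D_\Phi(\bm{\delta}; x(T)) \leq \mathrm{Lip}_{\|\cdot\|}(\Phi) \cdot \|\bm{\delta} - x(T)\| \leq \mathrm{Lip}_{\|\cdot\|}(\Phi) \cdot \sup_{x \in \mathsf{K}} \|x - \bm{\delta}\|
\]
by convexity of $\Phi$ (exactly as in the final display of the proof of Lemma \ref{lem:basic}) should yield the claimed inequality after letting $T \to \infty$. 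I expect no real obstacle beyond recognizing that $A\bm{\delta} = b$ combined with complementary slackness for $\xi$ is precisely what is needed to convert the otherwise-inconvenient normal-force term into the $-\langle \xi(t), \bm{\delta}\rangle$ correction; everything else is a direct reprise of the argument already made for Lemma \ref{lem:basic}.
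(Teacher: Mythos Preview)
Your argument is correct and essentially identical to the paper's: both decompose $\lambda(t) = A^\top \hat\lambda(t) - \xi(t)$, use $A\bm\delta = b \geq Ax(t)$ with $\hat\lambda(t) \geq 0$ to bound the $A^\top\hat\lambda$ contribution, invoke complementary slackness $\langle \xi(t), x(t)\rangle = 0$ (the relevant reference is \eqref{eq:cslack} rather than \eqref{itwastrue}), and then integrate the resulting inequality on $\partial_t D_\Phi(\bm\delta; x(t))$. The only cosmetic difference is that the paper packages the key step as $\langle \lambda(t)+\xi(t), \bm\delta - x(t)\rangle \geq 0$ and applies complementary slackness after integrating, whereas you apply it a line earlier.
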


\begin{proof}
Note that that $\lambda(t) =  A^{\top} \hat{\lambda}(t)- \xi(t)$ for some $\hat{\lambda}(t) \in \R_+^m$ and $\xi(t) \in \R_+^N$. 
In particular, one has:
\[
   \langle \lambda(t) +\xi(t), \bm{\delta} - x(t)\rangle  = \langle \hat{\lambda}(t), A(\bm{\delta} - x(t))\rangle \geq 0 \,,
\]
where the inequality uses the fact that $A \bm{\delta} = b \geq A x(t)$. Thus one obtains:
\begin{align*}
   \llangle c(t) -\xi(t), \bm{\delta} - x(t)\rrangle &\leq \llangle c(t) + \lambda(t),  \bm{\delta} - x(t)\rrangle \\
                                                     &= \llangle \nabla^2 \Phi(x(t)) x'(t), x(t) - \bm{\delta}\rrangle \\
                                                     &\stackrel{\mathclap{\eqref{eq:bregman-deriv}}}{=} \partial_t D_{\Phi}(\bm{\delta}; x(t))\,.
%                                                     &\stackrel{\mathclap{\eqref{eq:lipbound}}}{\leq} \Lip_{\|\cdot\|}(\Phi) \cdot \sup_{x \in \mathsf{K}} \|x-\bm{\delta}\|\,.
\end{align*}
Using that $\xi_i(t) \neq 0 \implies x_i(t) = 0$ for almost every $t \geq 0$ (recall \eqref{eq:cslack}),
integrating over $\R_+$ yields
\[
   \int_{\R_+} \langle c(t)-\xi(t), \bm{\delta}\rangle \,dt \leq \int_{\R_+} \langle c(t),x(t)\rangle\,dt + 
   \sup_{x \in \mathsf{K}} D_{\Phi}(\bm{\delta}; x)\,,
\]
which yields the desired result.
\end{proof}

\section{Entropic regularization}
\label{sec:regularization}

We will now instanatiate our regularizer $\Phi$ to be an appropriate
weighted and shifted entropy.
We first apply this to weighted star metrics, and then to general (bounded depth) trees.

\subsection{Warm-up: Weighted stars} \label{sec:warmup}

We consider here the case where $X$ is the set of leaves in a weighted star.
Let $w_i > 0$ be the weight on the edge from the $i^{th}$ leaf to the root, and denote $\Delta \seteq \max_{i \in [n]} w_i$.
We set $\mathsf{K} \seteq \{x \in \R_+^n : \sum_{i=1}^n x_i = 1\}$, and
the norm measuring movement in \eqref{eq:Wnorm} is the weighted $\ell_1$ norm on $\R^n$
given by $\|\xi\| \seteq \sum_{i=1}^n w_i |\xi_i|$ (note that the dual norm is an inversely weighted $\ell_{\infty}$ norm, namely $\|g\|^* \seteq \max_{i \in [n]} \frac{|g_i|}{w_i}$).

We use the regularizer $\Phi(x) \seteq \frac{1}{\eta} \sum_{i=1}^n w_i (x_i + \delta) \log(x_i + \delta)$ where $\eta>0$ is a learning rate and $\delta \in [0,1/2]$ is a shift parameter.
Now \eqref{eq:inclusion} yields the following dynamics:
\begin{equation} \label{eq:dynstar}
x_i'(t) = \frac{\eta}{w_i} (x_i(t) + \delta) (\mu(t) - c_i(t) + \xi_i(t))\qquad i=1,2,\ldots,n \,,
\end{equation}
where $\mu(t) \in \R$ is a Lagrange multiplier corresponding to the constraint $\sum_{i=1}^n x_i(t) = 1$, and $\xi_i(t) \geq 0$ is a Lagrange multiplier corresponding to the constraint $x_i(t) \geq 0$. 

\begin{theorem} \label{thm:star}
If there is an offline algorithm with service cost $S^*$ and movement cost $M^*$, then
the above algorithm satisfies 
\[
S \leq S^* + \frac{2 \log(1/\delta)}{\eta} M^*\,,
\]
and
\begin{equation}\label{eq:movecost-stars}
M \leq 2 \eta (1+\delta n) S + \left(1+8 \delta n \log(1/\delta)\right) \Delta \,.
\end{equation}
Taking $\eta = 4 \log n$ and $\delta = 1/n^2$ thus yields a $O(\log n)$-competitive algorithm (with $1$-competitive service cost in the sense that $S \leq S^* + M^*$).
\end{theorem}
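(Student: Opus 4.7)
The plan is to bound the service cost $S$ and the movement cost $M$ separately, leveraging \pref{lem:basic} for the service bound and a direct analysis of the dynamics \pref{eq:dynstar} combined with \pref{lem:reas} (reduced costs) for the movement bound. Throughout, let $\tilde x_i \seteq x_i + \delta$.

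My first step is the Lipschitz estimate $\mathrm{Lip}_{\|\cdot\|}(\Phi) \leq \frac{2\log(1/\delta)}{\eta}$: since $(\nabla \Phi(x))_i = \frac{w_i}{\eta}(1+\log \tilde x_i)$ and the dual of the weighted $\ell_1$ norm $\|\xi\| = \sum_i w_i|\xi_i|$ is $\|z\|^* = \max_i |z_i|/w_i$, the weights cancel and the supremum reduces to $\frac{1}{\eta}\log\frac{1+\delta}{\delta}$, which is at most $\frac{2\log(1/\delta)}{\eta}$ for $\delta \leq 1/2$. Applying \pref{lem:basic} with an absolutely continuous comparator $y(\cdot)$ that smoothly tracks the offline algorithm (with discrete jumps broken into piecewise linear segments in $\Delta(X)$ so that $\int \|y'\|\,dt = M^*$), and with $y(0) = x(0)$, immediately yields the service bound $S \leq S^* + \tfrac{2\log(1/\delta)}{\eta}M^*$.

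For the movement bound, the key observation is that the $w_i$ conveniently cancels between the norm and \pref{eq:dynstar}, leaving $\|x'(t)\| = \eta \sum_i \tilde x_i |\mu(t) - c_i(t) + \xi_i(t)|$. A case analysis based on complementary slackness gives $(\mu - c_i + \xi_i)_+ = (\mu - c_i)_+$ (always) and $(c_i - \mu - \xi_i)_+ = (c_i - \mu)_+\,\mathbf{1}[x_i > 0]$; together with the mass-conservation identity $\sum_i \tilde x_i (\mu - c_i + \xi_i)/w_i = 0$ (from $\sum_i x_i'(t)=0$), this should yield the pointwise bound
\[
\|x'(t)\| \;\leq\; 2\eta\,\langle c(t),x(t)\rangle + 2\eta\delta \sum_i \bigl(c_i(t) - \xi_i(t)\bigr)\,.
\]
I would then invoke \pref{lem:reas} with $\bm{\delta} = (1/n,\ldots,1/n)$, using $\sup_{x\in\mathsf{K}} \|x - \bm{\delta}\| \leq 2\Delta$, to get $\int \sum_i (c_i - \xi_i)\,dt \leq nS + \tfrac{4n\Delta \log(1/\delta)}{\eta}$. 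Substituting produces $M \leq 2\eta(1+\delta n)S + 8\delta n\log(1/\delta)\Delta$ (plus the additive $\Delta$ boundary term coming from the initial displacement in \pref{lem:reas}), matching the stated bound. The parameter choice $\eta = 4\log n$, $\delta = 1/n^2$ makes $\tfrac{2\log(1/\delta)}{\eta} = 1$, giving $S \leq \cost^*$, while the movement factor $2\eta(1+\delta n) = O(\log n)$ delivers the $O(\log n)$-competitive ratio.

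The delicate step is the pointwise movement estimate for non-uniform weights. In the uniform case ($w_i \equiv 1$), mass conservation reads $\sum_i \tilde x_i(\mu - c_i + \xi_i) = 0$, and, combined with $\mu \geq 0$, this cleanly yields $\sum_i \tilde x_i|\mu - c_i + \xi_i| = 2\sum_i \tilde x_i(\mu - c_i)_+ \leq 2\mu(1+n\delta) = 2\sum_i \tilde x_i(c_i - \xi_i) = 2\sum_i x_i c_i + 2\delta \sum_i(c_i - \xi_i)$. For general $w_i$, the movement is weighted by $w_i$ while mass conservation is weighted by $1/w_i$, so the symmetrization between positive and negative parts is no longer automatic. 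I anticipate the main obstacle to be showing that $\mu(1+n\delta)$ is nevertheless dominated by $\sum_i x_i c_i + \delta\sum_i(c_i - \xi_i)$ in the weighted setting, presumably via a sign-by-sign accounting that exploits the nonnegativity of $\mu$ and of the reduced costs $c_i - \xi_i$.
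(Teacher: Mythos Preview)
Your treatment of the service cost is correct and matches the paper: the Lipschitz estimate $\mathrm{Lip}_{\|\cdot\|}(\Phi)\le \tfrac{2\log(1/\delta)}{\eta}$ together with \pref{lem:basic} and $x(0)=y(0)$ gives the first inequality immediately. The use of \pref{lem:reas} with $\bm{\delta}=\tfrac{1}{n}\1$ to control $\int\sum_i(c_i-\xi_i)$ is also exactly what the paper does.

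The gap is in the movement bound. The pointwise inequality you are aiming for,
\[
\|x'(t)\|\;\leq\; 2\eta\,\langle c(t),x(t)\rangle + 2\eta\delta\sum_i(c_i(t)-\xi_i(t))\,,
\]
is \emph{false} for non-uniform weights. Take $n=2$, $w_1=1$, $w_2=\epsilon$ small, $x_1=1-\epsilon$, $x_2=\epsilon$, $c_1=0$, $c_2=C$, and $\delta\ll\epsilon$. Mass conservation forces $\mu\approx C/2$, so $w_1|x_1'|\approx \eta C/2$, whence $\|x'(t)\|\approx \eta C/2$; but the right-hand side is $2\eta C\epsilon + 2\eta\delta C$, which can be made arbitrarily smaller. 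The issue you flagged in your last paragraph is not merely ``delicate'': the symmetrization between positive and negative parts genuinely fails pointwise, because the movement norm weights by $w_i$ while mass conservation weights by $1/w_i$.

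The paper circumvents this by bounding only the \emph{negative} part of the velocity. If $x_i'(t)<0$ then $c_i-\xi_i-\mu>0$, hence $w_i|x_i'(t)|=\eta\tilde x_i(c_i-\xi_i-\mu)\le\eta\tilde x_i(c_i-\xi_i)$ using only $\mu\ge 0$; summing (and extending to all $i$ by $c_i\ge\xi_i$) gives $\|(x'(t))_-\|\le\eta\langle x(t)+\delta\1,\,c(t)-\xi(t)\rangle$ with no reference to mass conservation and no weight mismatch. One then uses the telescoping identity $\int_0^T\!\big(\|(x')_+\|-\|(x')_-\|\big)\,dt=\sum_i w_i(x_i(T)-x_i(0))$, bounded by $\Delta$, to obtain $M\le 2\int\|(x')_-\|\,dt+\Delta$. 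This is where the bare $\Delta$ in \eqref{eq:movecost-stars} actually comes from, not from the ``initial displacement in \pref{lem:reas}'' as you wrote; the contribution of \pref{lem:reas} is precisely the $8\delta n\log(1/\delta)\,\Delta$ term you already accounted for.
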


\begin{proof}
First notice that $(\nabla \Phi(x) - \nabla \Phi(y))_i = \frac{w_i}{\eta} (\log(x_i + \delta) - \log(y_i + \delta))$, and thus 
\[
\mathrm{Lip}_{\|\cdot\|}(\Phi) = \sup_{x,y \in K, i \in [n]} \frac{|(\nabla \Phi(x) - \nabla \Phi(y))_i|}{w_i} \leq \frac{2 \log(1/\delta)}{\eta} \,.
\]
Let $(y(t))_{t \geq 0}$ denote the path of some piecewise-continuous offline algorithm
achieving $S^*$ and $M^*$.
Applying \pref{lem:basic} yields
\begin{align*}
S=\int_{\R_{+}} \langle c(t), x(t)\rangle\,dt & \leq\int_{\R_{+}}\langle c(t), y(t)\rangle\,dt+\mathrm{Lip}_{\|\cdot\|}(\Phi)\int_{\R_{+}}\|y'(t)\|\,dt+ \mathrm{Lip}_{\|\cdot\|}(\Phi)\cdot \|y(0)-x(0)\|\\
 & =S^{*}+\frac{2\log(1/\delta)}{\eta}M^{*}\,,
\end{align*}
where the final equality uses $x(0)=y(0)$.

Before bounding the movement, we observe that $\mu(t) \geq 0$ almost surely. This follows from $c_i(t)\geq 0$ together with the following identity
\[
0 = \sum_{i: x_i(t)\neq0} x'_i(t)  =   \sum_{i: x_i(t)\neq0} \frac{\eta}{w_i} (x_i(t) + \delta) (\mu(t) - c_i(t) ) \,.\]
where the first equality holds almost surely (by \eqref{itwastrue}) from differentiating $\sum_{i=1}^n x_i(t) = 1$, and second follows from complementary slackness \eqref{eq:cslack}
for $\xi(t)$.
Moreover one also has $c_i(t) \geq \xi_i(t)$ almost surely. Indeed, again by \eqref{itwastrue} and complementary slackness, one has $\xi_i(t) > 0 \Rightarrow x(t)=0 \Rightarrow x_i'(t) = 0 \Rightarrow \mu + \xi_i(t) - c_i(t) = 0$, which shows that $c_i(t) - \xi_i(t) \geq 0$ (since $\mu(t) \geq 0$).

For the movement, we first note that (with the notation $(x)_- \seteq (x_1 1\{x_1 <0\}, \hdots, x_n 1\{x_n<0\})$):
\begin{align}
   M \leq 2\int_{\R_+} \|(x'(t))_-\|\,dt + \Delta. \label{eq:star1}
\end{align}
To calculate $\|(x'(t))_-\|$, note that the dynamics \eqref{eq:dynstar} and $\mu(t) \geq 0$ give $x_i'(t) \geq - \frac{\eta}{w_i} (x_i(t) + \delta) (c_i(t) - \xi_i(t))$, and furthermore since  $c(t) \geq \xi(t)$ one also has $|(x_i'(t))_-| \leq \frac{\eta}{w_i} (x_i(t) + \delta) (c_i(t) - \xi_i(t))$, which yields:
\[
\sum_{i : x_i'(t) \leq 0} w_i |x_i'(t)| \leq \eta \llangle x(t) +\delta \1, (c(t) - \xi(t)\rrangle \,.
\]
Hence (\ref{eq:star1}) gives
\begin{align*}
   M \leq2\eta\left(\int_{\R_+} \langle x(t)+\delta\1, c(t)-\xi(t)\rangle\,dt\right)+\Delta =2\eta S+2\eta\delta\int_{\R_+} 
   \llangle \1,c(t)-\xi(t)\rrangle\,dt +\Delta
\end{align*}
where we used that $\xi_i(t) = 0$ if $x_i(t) >0$.

Now, an application of 
\pref{lem:reas} shows that $$\int_{\R_+} \llangle c(t)-\xi(t), \frac{1}{n}\1\rrangle\,dt\leq\int_{\R_+} \llangle c(t), x(t)\rrangle\,dt+\frac{4\log(1/\delta)}{\eta}\Delta$$ which our verification of \eqref{eq:movecost-stars}.
\end{proof}

\newcommand{\vvT}{\vvmathbb{T}}
\newcommand{\vvM}{\vvmathbb{M}}
\newcommand{\ch}{\chi}
\newcommand{\extmeas}{\widehat{M}}
\renewcommand{\root}{\vvmathbb{r}}

\subsection{The multiscale entropy and MTS on trees}

Consider now a rooted tree $\cT = (V,E)$ with root $\root \in V$
and leaves $\cL \subseteq V$.
Let $\{w_v > 0 : v\in V  \textbackslash \{\root\}\}$ be a collection of positive
weights on $ V  \textbackslash \{\root\}$ (except that $w_{\root} = 0$).
We will assume (without loss of generality)
that every leaf $\ell \in \cL$ is at the same
combinatorial distance $D$ from the root. 
For $u \in V \setminus \{\root\}$, let $p(u) \in V$ denote the parent of $u$.
Let $\dist_w(x,y)$ denote the weighted path distance between $x,y \in V$,
where an edge $\{p(u), u\}$ is given weight $w_u$.

Our setting is now $(X,d) \seteq (\cL, \dist_w)$.
The natural norm in which to measure movement (recall \eqref{eq:Wnorm}) is the weighted $\ell_1$ norm on an expanded state space: 
For $z \in \R^V$, we denote
\[
   \|z\| \seteq \left\|z\right\|_{\ell_1(w)} = \sum_{v \in V} w_v \left|z_v\right|\,,
\]
and we set 
\[
\mathsf{K} \seteq \left\{x \in \R^V : x_{\root} = 1, \, \text{and} \, \forall u \in V \setminus \cL , x_u \leq \sum_{v : p(v) = u} x_v , \, \text{and} \, \forall \ell \in \cL, x_{\ell} \geq 0 \right\} \,.
\] 
%We note that $\mathsf{K}$ does not exactly satisfy \eqref{eq:slice}, in the sense that the appropriate slice contains {\em supermeasures} of total mass greater than $1$.  However the mirror descent dynamics will maintain a proper measure in that slice, which is sufficient, see \pref{lem:stayameasure}.
We note that $\mathsf{K}$ does not enforce that 
the total mass of each slice (all vertices at the same
height) is exactly one, nor does it enforce that all variables are nonnegative.
However the mirror descent dynamics will implicitly maintain these constraints.

\paragraph{Mirror descent dynamics}
Define
\[
   \Phi(x) \seteq \frac{1}{\eta} \sum_{u \in V} w_u (x_u + \delta_u) \log(x_u + \delta_u)\,,
\]
where $\eta>0$ is a learning rate and $\bm{\delta} \in (0,1]^V$ is a shift parameter satisfying $\delta_{\root}=1$ and
\[
  \delta_u = \sum_{v : p(v) = u} \delta_v \qquad \forall u \in V\,.
\]
Note that $\Phi$ is well-defined in a neighborhood of the positive orthant $\R_+^V$, which will be sufficient to make the dynamics on $\mathsf{K}$ well-defined (see \pref{lem:stayameasure}).

Using the formula for the normal cone of $\mathsf{K}$, this gives the following dynamics,
where we use the notation $\odot$ for the Hadamard (entrywise) product:
\begin{equation} \label{eq:dyntrees}
   w \odot x'(t) = - \eta (x(t) + \bm{\delta}) \odot (c(t) + \lambda(t) - \xi(t) - \mu(t)) \,,
\end{equation}
where $\lambda(t) = \sum_{u \in V} \hat{\lambda}_{u}(t) \left(e_u - \sum_{v : p(v) = u} e_v\right)$ for some $\hat{\lambda}_{u}(t) \geq 0$, $\xi(t) = \sum_{\ell \in \cL} \xi_{\ell}(t) e_{\ell}$ for some $\xi_{\ell}(t) \geq 0$, and $\mu(t) = \hat{\mu}(t) e_{\root}$ for some $\hat{\mu}(t) \geq 0$.
%Note that $\lambda(t)$ can be both positive and negative.

\begin{lemma} \label{lem:stayameasure}
For almost all $t \geq 0$, $\sum_{u \in \cL} x_u(t) = 1$, and $x_u(t) \geq 0$ for all $u \in V$.
\end{lemma}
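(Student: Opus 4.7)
My plan is to establish the stronger invariant that each of the defining inequalities of $\mathsf{K}$ is tight along the trajectory:
\[
x_u(t) = \sum_{v : p(v) = u} x_v(t) \qquad \text{for every internal vertex } u.
\]
This holds at $t=0$ because the initial state is a leaf distribution lifted to the tree. Given the invariant, both assertions of the lemma follow at once: descending from $x_{\root}(t)=1$ and iterating the tight equalities down the tree gives $\sum_{\ell \in \cL} x_\ell(t) = 1$; and since $x_\ell \geq 0$ is built into $\mathsf{K}$, ascending the tree yields that every internal $x_u(t)$ equals the total leaf mass in its subtree, hence is non-negative.

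As a preliminary step, I would verify that $x_u(t) + \delta_u > 0$ for all $u$ and all $t$: by \eqref{eq:dyntrees}, $x_u'(t)=0$ whenever $x_u(t)+\delta_u=0$, so the open set $\{x : x_u + \delta_u > 0 \; \forall u\}$ is forward-invariant. This ensures that $\Phi$ remains smooth along the trajectory and that the differential inclusion \eqref{eq:inclusion} is well-posed.

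To prove the tight-constraint invariant, I would differentiate the slack $\psi_u(t) \seteq \sum_{v : p(v) = u} x_v(t) - x_u(t)$. Using \eqref{eq:dyntrees} together with the identity $\sum_{v : p(v) = u} \delta_v = \delta_u$, setting $\psi_u'(t) = 0$ becomes a single linear equation in $\hat{\lambda}_u(t)$, given the $\hat{\lambda}_v(t)$ and $\xi_v(t)$ at the children of $u$. Solving this system bottom-up from the leaves determines all multipliers uniquely, and a direct calculation (using $c \geq 0$ and $x + \delta > 0$) confirms that $\hat{\lambda}_u(t) \geq 0$ at every level, consistent with complementary slackness \eqref{eq:cslack}. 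Intuitively, the unconstrained mirror descent velocity has $x_u' = 0$ at internal vertices (where the cost vanishes) but $x_\ell' \leq 0$ at leaves (where mass is drained by $c \geq 0$), so $\psi_u$ tries to go negative; the normal force $\hat{\lambda}_u > 0$ is therefore always activated to hold the inequality tight.

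The main obstacle lies at times when some leaf mass $x_\ell(t)$ hits $0$ and the barrier multiplier $\xi_\ell(t)$ activates in concert with $\hat{\lambda}_{p(\ell)}(t)$: the bottom-up solve must then select $\xi_\ell, \hat{\lambda}_{p(\ell)} \geq 0$ simultaneously and consistently with both $x_\ell = 0$ and the tight equality at $p(\ell)$. I would handle this by appealing to \cite[Thm.~2.1]{BCLLM18}, which guarantees that \eqref{eq:inclusion} admits a unique absolutely continuous solution together with a measurable selection of multipliers; the explicit bottom-up construction above furnishes such a selection and verifies the invariant for almost every $t$.
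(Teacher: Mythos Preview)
Your high-level plan—prove that every constraint $x_u \leq \sum_{v:p(v)=u} x_v$ is tight along the trajectory, then read off both conclusions—matches the paper's. But your execution has a gap at the ``bottom-up solve.'' Expanding $\psi_u'=0$ via \eqref{eq:dyntrees}, the term $x_u'$ carries a factor $(\hat{\lambda}_u - \hat{\lambda}_{p(u)})$, so the equation at $u$ couples to the \emph{parent} multiplier as well as to the children's data. The linear system is tridiagonal along root-to-leaf paths, not triangular; it cannot be resolved by a single upward pass as you describe. One can of course eliminate upward and back-substitute downward, but then the assertion that ``a direct calculation confirms $\hat{\lambda}_u(t) \geq 0$'' is no longer direct: each $\hat{\lambda}_u$ becomes a weighted combination of quantities from both directions, and at a leaf with $x_\ell=0$ the choice $\xi_\ell = c_\ell - \hat{\lambda}_{p(\ell)}$ itself depends on the parent multiplier, whose sign you are simultaneously trying to establish. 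Verifying all these signs at once is essentially the content of the lemma, and you have only asserted it.

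The paper's argument sidesteps this entirely by using complementary slackness in the contrapositive direction. On $I_u = \{t : \psi_u(t)>0\}$ one has $\hat{\lambda}_u=0$ by \eqref{eq:cslack}; since $c_u=0$ for internal $u$, the dynamics then give $x_u'\geq 0$, and for each child $v$ one gets $x_v'\leq 0$ (invoking \eqref{itwastrue} when $x_v=0$). Hence $\psi_u'\leq 0$ a.e.\ on $I_u$, and together with $\psi_u(0)=0$ and $\psi_u\geq 0$ on $\mathsf{K}$ this forces $I_u$ to be null. No multiplier is ever explicitly computed or signed; the argument only needs that a multiplier \emph{vanishes} when its constraint is slack, which is automatic.
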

\begin{proof}
We prove that for almost all $t \geq 0$, $x_u(t) \geq \sum_{v:p(v)=u}x_{v}(t)$ for all $u \notin \cL$.
Since $x(t) \in \mathsf{K}$, this implies that the inequality holds with equality,
and suffices to establish the lemma (due to $x_{\root}=1$ and $\{x_{\ell} \geq 0: \ell \in \cL\}$).

For any $u\notin\cL$, denote
\[
   I_u \seteq \left\{t \geq 0 : x_{u}(t) < \sum_{v:p(v)=u}x_{v}(t)\right\}.
\]
For any $t\in I_u$, we have $\hat{\lambda}_{u}(t)=0$ by complementary slackness, and $c_{u}(t)=0$ since $u\notin\cL$, which together imply $x_{u}'(t)\geq0$. Now for a child $v$ of $u$, it easy to see that $\hat{\lambda}_u(t)=0$ implies that if $x_v(t) \geq 0$ then $x_v'(t) \leq 0$ (indeed, $\hat{\lambda}_u(t)$ is the only part in \eqref{eq:dyntrees} which can induce a strictly positive value for $x'_v(t)$). 
%In particular, \eqref{itwastrue} allows to remove the condition $x_v(t) \neq 0$ in the implication, yielding $x_{v}'(t)\leq0$ for almost all $t \in I_u$.
Let us assume by induction on the level of the tree (starting from the leafs) that we have already proved $x_v(t) \geq 0$. 
Thus we have $x'_{u}(t)\geq\sum_{v:p(v)=u}x_{v}'(t)$ for almost all $t$ such that $x_{u}(t)<\sum_{v:p(v)=u}x_{v}(t)$, implying that $I_u$ has measure zero for each $u \notin \cL$, and hence concluding the proof.
\end{proof}

\begin{theorem}
If there is an offline algorithm with service cost $S^*$ and movement cost $M^*$, then
the above algorithm satisfies
\[
S \leq S^* + \frac{2 \log(1/\min_{u \in \cL} \delta_u)}{\eta} M^* \,,
\]
and
\begin{equation}\label{eq:movecost-trees}
M \leq 4 \eta D S + \left(1+ 2 D + 8  D\log(1/\min_{u \in \cL} \delta_u)\right) \mathrm{diam}(X) \,.
\end{equation}
Taking $\eta \seteq 2\log n$ and $\delta_u \seteq 1/n$ for $u \in \cL$ thus yields an  $O(D \log n)$-competitive algorithm with
$1$-competitive service cost in the sense that $S \leq S^* + M^*$.
\end{theorem}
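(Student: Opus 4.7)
My plan is to follow the three-step template of \pref{thm:star}, now unfolded across the $D$ levels of the tree; the new $D$-factor in the competitive ratio arises from iterating the reduced-cost lemma once per level. For the service cost, I compute $\nabla_u\Phi(x) = \tfrac{w_u}{\eta}(\log(x_u+\delta_u)+1)$ and note that the dual of $\|\cdot\|_{\ell_1(w)}$ is $\|z\|^{*}=\max_u |z_u|/w_u$; since $x_u+\delta_u\in[\delta_u,2]$ and the $\delta_u$'s attain their minimum at leaves, $\mathrm{Lip}_{\|\cdot\|}(\Phi)\leq 2\log(1/\min_{u\in\cL}\delta_u)/\eta$. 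Then I apply \pref{lem:basic} with comparator $y(t)\in\mathsf{K}$ defined as the lift of the offline leaf-distribution $p^{*}(t)$, namely $y_u(t):=\sum_{\ell\in\cL\cap\mathrm{desc}(u)}p^{*}_\ell(t)$ for internal $u$. Since $c(t)$ is supported on leaves, $\int\langle c(t),y(t)\rangle\,dt=S^{*}$, and the standard identification of tree earthmover with the $\|\cdot\|_{\ell_1(w)}$-distance of the lifts gives $\int\|y'(t)\|\,dt=M^{*}$, yielding the stated service bound.

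For the movement cost, the same identification gives $M=\int\|x'(t)\|\,dt$. Splitting $|x_u'|=(x_u')_{+}+(x_u')_{-}$ and pairing them using the identity $x_u'(t)=\sum_{v:p(v)=u}x_v'(t)$, which follows by differentiating the tight constraint of \pref{lem:stayameasure}, one obtains
\[
   M\leq 2\int_{\R_+}\sum_u w_u(x_u'(t))_{-}\,dt+\mathrm{diam}(X),
\]
where the $\mathrm{diam}(X)$ term absorbs the net drift of $\sum_u w_u x_u(t)$ over time. By \eqref{eq:dyntrees}, $w_u(x_u')_{-}=\eta(x_u+\delta_u)(\gamma_u)_{+}$ with $\gamma_u=c_u+\hat\lambda_u-\hat\lambda_{p(u)}-\xi_u-\mu_u$; in parallel with the proof of \pref{thm:star}, one verifies $\hat\mu(t)\ge 0$ and $c_\ell(t)\ge\xi_\ell(t)$ a.e. The key new move is to combine the recursive identity $\delta_u=\sum_{v:p(v)=u}\delta_v$ with the tight mass-conservation $x_u=\sum_{v:p(v)=u}x_v$ from \pref{lem:stayameasure}: summing $(x_u+\delta_u)\gamma_u$ over all $u$ causes the $\hat\lambda_u$-terms to telescope within each fan, reducing the problem to a collection of leaf-level reduced-cost terms $\eta\int\langle\bm{\delta}^{(u)},c(t)-\xi(t)\rangle\,dt$, one per internal vertex $u$, where $\bm{\delta}^{(u)}\in\mathsf{K}$ is a normalized shift supported on $\mathrm{desc}(u)\cap\cL$. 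Applying \pref{lem:reas} to each such term bounds it by $S+\mathrm{Lip}_{\|\cdot\|}(\Phi)\cdot O(\mathrm{diam}(X))$, and summing across the $D$ levels produces the advertised $4\eta D$ prefactor on $S$ and $O(D\log(1/\min_\ell\delta_\ell))$ prefactor on $\mathrm{diam}(X)$ in \eqref{eq:movecost-trees}.

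Plugging in $\eta=2\log n$ and $\delta_u=1/n$ for leaves gives $\mathrm{Lip}_{\|\cdot\|}(\Phi)=O(1)$, so that $S\leq S^{*}+M^{*}$ is $1$-competitive in service and $M=O(D\log n)(\cost^{*}(\bm c)+\mathrm{diam}(X))$, establishing the $O(D\log n)$ competitive ratio. The main obstacle relative to the star case is the presence of an internal Lagrange multiplier $\hat\lambda_u$ at every non-leaf, which couples the mirror-descent flow across different scales. The clean resolution is to use the recursive compatibility of the shift $\bm{\delta}$ with the tree to make these multipliers telescope within each fan; after telescoping, one can localize \pref{lem:reas} to each subtree, and the $D$ resulting applications are precisely what produce the $D$-factor in the final bound.
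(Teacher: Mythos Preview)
Your service-cost argument is correct and matches the paper. The movement-cost argument, however, has a genuine gap.

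The telescoping you describe does collapse: summing $(x_u+\delta_u)\gamma_u$ over all $u$, the $\hat\lambda$-terms cancel (using $\delta_u=\sum_{v:p(v)=u}\delta_v$ and complementary slackness), leaving $\sum_{\ell\in\cL}(x_\ell+\delta_\ell)(c_\ell-\xi_\ell)-2\hat\mu$. But this equals $-\tfrac{1}{\eta}\sum_u w_u x_u'(t)$, i.e., it controls the \emph{net signed} derivative, not $\sum_u w_u(x_u')_-$. To bound the latter you need $\sum_u(x_u+\delta_u)(\gamma_u)_+$, and the deficit between this and the telescoped sum is exactly $\sum_u(x_u+\delta_u)(\gamma_u)_- = \tfrac{1}{\eta}\sum_u w_u(x_u')_+$, the other half of the movement. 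So the plain telescoping is circular and yields no bound on $\|x'(t)\|$.

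The quantity that actually governs the movement is $\eta\sum_{u\notin\cL}\hat\lambda_u(t)(x_u(t)+\delta_u)$, since both $(x_u')_+$ and, for internal $u$, $(x_u')_-$ are dominated by the parent/own multiplier term. Your proposal gives no mechanism to bound $\int\sum_u\hat\lambda_u(x_u+\delta_u)\,dt$: this is not of the form $\int\langle\bm\delta',c-\xi\rangle\,dt$ for any fixed $\bm\delta'\in\mathsf{K}$, so \pref{lem:reas} does not apply to it, whether once per level or once per internal vertex. (The phrasing ``one per internal vertex $u$'' followed by ``summing across the $D$ levels'' is also internally inconsistent; a tree generally has far more than $D$ internal vertices.)

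The paper's resolution is different: it introduces the depth-weighted potential $\Psi(x)=\sum_u d_u w_u x_u$. Differentiating $\Psi$ along the dynamics and performing the \emph{same} telescoping, but now weighted by $d_u$, produces exactly
\[
   \partial_t\Psi(x(t)) \;=\; -\,\eta D\,\llangle c(t)-\xi(t),\,x(t)+\bm\delta\rrangle \;+\; \eta\sum_{u\notin\cL}\hat\lambda_u(t)\bigl(x_u(t)+\delta_u\bigr),
\]
because $d_v-d_u=1$ across each edge. Hence $\sum_{u:x_u'>0} w_u x_u' \leq \partial_t\Psi(x(t)) + \eta D\,\llangle c(t)-\xi(t),x(t)+\bm\delta\rrangle$. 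After integrating, $\Psi$ contributes at most $2D\,\mathrm{diam}(X)$, and a \emph{single} application of \pref{lem:reas} with the global $\bm\delta$ handles $\eta D\int\langle\bm\delta,c-\xi\rangle\,dt$. The $D$-factor therefore comes from the depth weight in $\Psi$, not from iterating \pref{lem:reas} across levels.
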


\begin{proof}
First notice that $(\nabla \Phi(x) - \nabla \Phi(y))_u = \frac{w_u}{\eta} (\log(x_u + \delta_u) - \log(y_u + \delta_u))$, and thus 
\[
\mathrm{Lip}_{\|\cdot\|}(\Phi) = \sup_{x,y \in K, u \in V} \frac{|(\nabla \Phi(x) - \nabla \Phi(y))_u|}{w_u} \leq \frac{2 \log(1/ \min_{u \in \cL} \delta_u)}{\eta} \,.
\]
Let $(y(t))_{t \geq 0}$ denote the path of some piecewise-continuous offline algorithm
achieving $S^*$ and $M^*$.  Then \pref{lem:basic} yields
\begin{align*}
S=\int_{\R_{+}} \langle c(t), x(t)\rangle\,dt & \leq\int_{\R_{+}}\langle c(t), y(t)\rangle\, dt+\mathrm{Lip}_{\|\cdot\|}(\Phi)\int_{\R_{+}}\|y'(t)\|\,dt+2\mathrm{Lip}_{\|\cdot\|}(\Phi)\cdot \|y(0)-x(0)\|\\
 & =S^{*}+ \frac{2 \log(1/ \min_{u \in \cL} \delta_u)}{\eta}M^{*}.
\end{align*}
%First notice that $\mathrm{Lip}_{\|\cdot\|}(\Phi) \leq \frac{2 \log(1/ \min_{u \in \cL} \delta_u)}{\eta}$, and thus the first inequality follows from \pref{lem:basic}.

For the movement, we note that
\begin{align}
   M \leq 2\int_{\R_+} \|(x'(t))_+\|\,dt +  \mathrm{diam}(X). \label{eq:tree1}
\end{align}
To calculate $\|(x'(t))_+\|$, note that the dynamics \eqref{eq:dyntrees} gives for almost all $t$ (recall that by \eqref{itwastrue}, $x_u'(t) > 0 \Rightarrow \xi_u(t) = 0$):
%For the movement note that up to a multiplicative factor $2$ and additive $\mathrm{diam}(X)$ it suffices to calculate $\int \|(x'(t))_+\|_{\ell_1(w)}$, and in particular the dynamics \eqref{eq:dyntrees} gives:
\[
\sum_{u : x_u'(t) > 0} w_u x_u'(t) \leq \eta \sum_{u \in V \setminus \cL} \hat{\lambda}_u(t) \sum_{v : p(v) = u} (x_v(t) + \delta_v) = \eta \sum_{u \in V \setminus \cL} \hat{\lambda}_u(t) (x_u(t) + \delta_u) \,,
\]
where the equality uses the assumption on $\bm{\delta}$ and the fact that $\hat{\lambda}_u(t) \neq 0 \Rightarrow x_u(t) = \sum_{v : p(v) = u} x_v(t)$.

In analogy with the auxiliary depth potential employed in \cite{BCLLM18},
we consider the weighted depth:
\[
   \Psi(x) \seteq \sum_{u \in V} d_u w_u x_u,
\]
where $d_u$ is the combinatorial depth of $u$.

Using $c_u(t) = \xi_u(t) = 0$ for $u \not\in \cL$, we have
\begin{eqnarray*}
   \partial_t \Psi(x(t)) & = & - \eta D \llangle c(t) -\xi(t), x(t) + \bm{\delta}\rrangle - \eta  \sum_{u \in V \setminus \cL} \hat{\lambda}_{u}(t) \left(d_u (x_u(t) + \delta_u) - \sum_{v : p(v) = u} d_v(x_v(t) + \delta_v)\right) \\
                         & = & - \eta  D\llangle c(t) -\xi(t), x(t) + \bm{\delta}\rrangle + \eta  \sum_{u \in V \setminus \cL} \hat{\lambda}_{u} (t) (x_u(t) + \delta_u) \,,
\end{eqnarray*}
where the second equality uses $\sum_{v : p(v) = u} d_v(x_v(t) + \delta_v) = (d_u+1) \sum_{v : p(v) = u} (x_v(t) + \delta_v) = (d_u+1) (x_u(t) + \delta_u)$.
Combining the two above displays one obtains
\[
   \sum_{u : x_u'(t) \geq 0} w_u x_u'(t) \leq \partial_t \Psi(x(t)) + \eta D \llangle c(t)-\xi(t), x(t) + \bm{\delta}\rrangle \,.
\]
Putting it into (\ref{eq:tree1}) gives
\begin{align*}
   M & \leq2(\Psi(x(T))-\Psi(x(0)))+2\eta DS+2\eta D \int_{\R_+} \llangle \bm{\delta}, c(t)-\xi(t)\rrangle\,dt+\mathrm{diam}(X)\\
     & \leq2D\cdot\mathrm{diam}(X)+2\eta DS+2\eta D \int_{\R_+} \llangle \bm{\delta}, c(t)-\xi(t)\rrangle\,dt+\mathrm{diam}(X)\,.
\end{align*}

By \pref{lem:reas}, we have
$$\int_{\R_+} \llangle \bm{\delta}, c(t)-\xi(t)\rrangle\, dt \leq \int_{\R_+} \llangle c(t), x(t)\rrangle\, dt + \frac{4\log(1/ \min_{u \in \cL} \delta_u)}{\eta}  \mathrm{diam}(X)$$ which establishes \eqref{eq:movecost-trees} and completes the proof.
%The proof of the second inequality in the theorem statement is now completed by invoking \pref{lem:reas} (observe that again due to the dynamics one can clearly assume without loss of generality that the cost is reasonable).
\end{proof}

\section{Unfair MTS and subspace gluing}
\label{sec:unfair}

We now apply the mirror descent framework to the unfair MTS problem,
yielding an optimal gluing strategy for HSTs.

\subsection{Log-sum-exp gluing on a weighted star}
Let $\bm{\beta} \in \R_+^n$ and $\gamma \in [1,+\infty)$ be {\em unfairness ratios}.
The unfair service cost to service the cost vector $c \in \R_+^n$ in a state $x \in \R_+^n$ is defined to be
$\langle \bm{\beta} \odot c, x\rangle$, while the unfair movement cost is the movement cost multiplied by $\gamma$. 

In unfair MTS, the online algorithm's total cost is the sum of the unfair service cost $S^u$
and unfair movement cost $M^u$, given by
\begin{align*}
   S^u &\seteq \int_{\R_+} \llangle \bm{\beta} \odot c(t), x(t)\rrangle\, dt \\
   M^u &\seteq \gamma \int_{\R_+} \|x'(t)\|\,dt,
\end{align*}
while an offline algorithm $(y(t))_{t \in \R_+}$ is still evaluated through the sum of (regular) service cost $S^* = \int_{\R_+} \langle c(t), y(t)\rangle\, dt$ and (regular) movement cost $M^* = \int_{\R_+} \|y'(t)\|\,dt$.

We will now consider unfair MTS on weighted stars (recall \pref{sec:warmup}), an
thus we define $\mathsf{K} \seteq \{x \in \R_+^n : \sum_{i=1}^n x_i = 1\}$ and $\|x\| \seteq \sum_{i=1}^n w_i |x_i|$.
Writing $\beta_i = \log u_i$, one can see from the refined guarantees of \pref{thm:star} that a competitive ratio of $O(\gamma \log n + \max_{i \in [n]} \log u_i)$ is achievable for unfair MTS.
We will see now that
one can obtain a competitive ratio of order $O(\gamma \log\left(\sum_{i=1}^n u_i\right))$. 

\paragraph{Entropic regularization with multiple learning rates}
We will use the regularizer
\[
   \Phi(x) \seteq \sum_{i=1}^n \frac{w_i}{\eta_i} (x_i + \delta_i) \log(x_i + \delta_i)\,,
\]
where $\bm{\eta} \in (0,\infty)^n$ is a set of learning rates and $\bm{\delta} \in (0,1/2]^n$ are shift parameters.
This gives the following dynamics: 
\begin{equation} \label{eq:dynstar2}
x_i'(t) = \frac{\eta_i}{w_i} (x_i(t) + \delta_i) (\mu(t) - c_i(t) + \xi_i(t))\,,
\end{equation}
where $\mu(t) \in \R^n$ is a Lagrange multiplier corresponding to the constraint $\sum_{i=1}^n x_i(t) = 1$, and $\xi_i(t)$ is a Lagrange multiplier corresponding to $x_i(t) \geq 0$.

\paragraph{Unfair cost and fair cost} The mirror descent analysis (\pref{lem:basic}) naturally tracks the fair service cost. In order to get an estimate on the unfair service cost, we propose to use multiple learning rates so that the sum of the unfair service cost and movement cost is proportional to the fair service cost. Indeed, \eqref{eq:dynstar2} gives
\[
   \|(x'(t))_-\| \leq \llangle \bm{\eta} \odot \left(c(t)-\xi(t)\right), x(t)+\bm{\delta}\rrangle,
\]
and thus
\begin{align*}
   2 \gamma \|(x'(t))_-\| + \llangle \bm{\beta} \odot c(t), x(t)\rrangle \leq \llangle (\bm{\beta} + 2\gamma \bm{\eta}) \odot c(t), x(t)\rrangle + \llangle c(t)-\xi(t), 2\gamma \bm{\eta} \odot \bm{\delta}\rrangle \,
\end{align*}
We now naturally pick $\bm{\eta}$ such that $\bm{\beta}+2\gamma\bm{\eta} = \zeta \1$ for some constant $\zeta \geq 0$.
Employing \pref{lem:basic} and \pref{lem:reas}, one obtains the following.

\begin{theorem} \label{thm:unfair}
   With $\bm{\eta}$ such that $\beta_i+2 \gamma \eta_i = \zeta$ for all $i$ and $\gamma \geq 1$, the algorithm (\ref{eq:dynstar2}) satisfies
   \begin{equation}\label{eq:unfair-gen}
   S^u + M^u \leq \left(\zeta + 2\gamma \langle \bm{\eta}, \bm{\delta}\rangle\right) \left(S^* +  L M^*\right) + \left(1 + 2\zeta L+6 \gamma L\langle \bm{\eta},
   \bm{\delta}\rangle\right)  \Delta \,,
\end{equation}
where $L \seteq \max_{i \in [n]} \frac{2\log(1/\delta_i)}{\eta_i}$.

Assume now that $\beta_i = 8 \gamma (\log(u_i) + C)$ with $u_i > 0$ and $C\geq0$. Taking $\eta_i = 4 \log(U/u_i)$ and $\delta_i = (u_i/U)^2$ with
$U \seteq \sum_{i=1}^n u_i$, yields
an $8 \gamma (\log(U)+ C+ 1)$-competitive algorithm for unfair MTS.
More precisely we have the following inequality, even when the offline algorithm
is allowed to start in a different state than the online algorithm:
\begin{equation} \label{eq:tobeusedincombi}
S^u + M^u \leq 8 \gamma (\log(U)+ C+ 1) (S^* + M^* + 4 \Delta) \,.
\end{equation}
\end{theorem}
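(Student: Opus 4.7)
My plan is to derive \eqref{eq:unfair-gen} by integrating the pointwise inequality displayed just before the theorem statement, then invoking the two workhorse lemmas \pref{lem:basic} and \pref{lem:reas} from the mirror descent framework. Using the choice $\bm{\beta}+2\gamma \bm{\eta} = \zeta \bm{1}$ together with the complementary-slackness identity $\langle \bm{\eta}\odot\xi(t), x(t)\rangle = 0$ (which holds almost surely, exactly as in the proof of \pref{thm:star}), that inequality simplifies to
\[
   2\gamma \|(x'(t))_-\| + \langle \bm{\beta} \odot c(t), x(t)\rangle \leq \zeta \langle c(t), x(t)\rangle + 2\gamma \langle \bm{\eta}\odot \bm{\delta}, c(t) - \xi(t)\rangle \,.
\]
Integrating over $\R_+$ and combining with the star-specific bound $\int_{\R_+}\|x'(t)\|\,dt \leq 2\int_{\R_+}\|(x'(t))_-\|\,dt + \Delta$ (a consequence of the mass-conservation identity $\sum_i x_i(t) \equiv 1$, exactly as in the proof of \pref{thm:star}) produces
\[
   S^u + M^u \leq \zeta S + 2\gamma \int_{\R_+} \langle \bm{\eta}\odot \bm{\delta}, c(t)-\xi(t)\rangle \,dt + \gamma \Delta\,,
\]
where $S$ is the fair service cost of the online algorithm.

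To bound $S$ I will apply \pref{lem:basic} with Lipschitz constant $\mathrm{Lip}_{\|\cdot\|}(\Phi) \leq L \seteq \max_i 2\log(1/\delta_i)/\eta_i$, producing $S \leq S^* + L M^* + 2L\|y(0)-x(0)\|$; the $\|y(0)-x(0)\| \leq \Delta$ term is retained so that the stronger form \eqref{eq:tobeusedincombi}, which allows the offline algorithm to start at a different state, drops out for free. To control the remaining reduced-cost term I will apply \pref{lem:reas} to the normalized probability vector $\bm{\nu} \seteq \langle \bm{\eta}, \bm{\delta}\rangle^{-1} \bm{\eta}\odot \bm{\delta}$, which lies in $\mathsf{K}$ because $\bm{\eta}, \bm{\delta} > 0$, yielding
\[
   \int_{\R_+} \langle \bm{\eta}\odot \bm{\delta}, c(t)-\xi(t)\rangle\,dt = \langle \bm{\eta}, \bm{\delta}\rangle \int_{\R_+} \langle \bm{\nu}, c(t)-\xi(t)\rangle\,dt \leq \langle \bm{\eta}, \bm{\delta}\rangle \left(S + L \Delta\right)\,.
\]
Substituting these two bounds into the display from the first paragraph and collecting the coefficients of $S^* + L M^*$ and of $\Delta$ (absorbing $\gamma$ into the leading $1$ via $\gamma \geq 1$) delivers \eqref{eq:unfair-gen}.

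For the specific parameter instantiation, the derivation of \eqref{eq:tobeusedincombi} reduces to three elementary verifications: (i) $\beta_i + 2\gamma \eta_i = 8\gamma(\log u_i + C) + 8\gamma\log(U/u_i) = 8\gamma(\log U + C)$, so $\zeta = 8\gamma(\log U+C)$; (ii) $L = 1$, because $2\log(1/\delta_i)/\eta_i = 4\log(U/u_i)/(4\log(U/u_i)) = 1$ for every $i$; and (iii) setting $p_i \seteq u_i/U$, the scalar inequality $p\log(1/p) \leq 1/e$ for $p \in (0,1]$ gives $\langle \bm{\eta}, \bm{\delta}\rangle = 4\sum_i p_i^2 \log(1/p_i) \leq (4/e)\sum_i p_i = 4/e$. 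Plugging these three facts into \eqref{eq:unfair-gen} and invoking $\gamma \geq 1$ and $\log U + C + 1 \geq 1$ delivers \eqref{eq:tobeusedincombi}. The main obstacle I anticipate is the careful bookkeeping of constants in this last step: the qualitative arithmetic is straightforward, but obtaining the clean factor $8\gamma(\log U+C+1)$ in front of $S^* + M^* + 4\Delta$ requires tracking how $\gamma$, $L$, and $\langle \bm{\eta}, \bm{\delta}\rangle$ combine across the two lemma applications and the one-sided-movement reduction; in particular, the additive coefficient of $\Delta$ in \eqref{eq:unfair-gen} must be absorbed into $32\gamma(\log U + C + 1)$ using only $\gamma \geq 1$ and the bound $\langle\bm{\eta},\bm{\delta}\rangle \leq 4/e$, with no slack available to spare.
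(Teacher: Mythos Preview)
Your approach is essentially the same as the paper's: integrate the pointwise inequality, reduce to the fair service cost $S$ via $\bm{\beta}+2\gamma\bm{\eta}=\zeta\1$, then apply \pref{lem:basic} to bound $S$ and \pref{lem:reas} (with the normalized vector $\bm{\nu}$) to bound the reduced-cost term. The paper does exactly this, and your verification of $\zeta$, $L$, and $\langle\bm{\eta},\bm{\delta}\rangle$ for the specific instantiation also mirrors the paper (you use $p\log(1/p)\le 1/e$ where the paper uses $x^2\log(1/x)\le (2/e)x^{3/2}$; your route is slightly cleaner and in fact gives a tighter constant).

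One small wrinkle: your step ``absorbing $\gamma$ into the leading $1$ via $\gamma\ge 1$'' is not a valid absorption---with $\gamma\ge 1$ you can only go the other way. What is actually happening is that your derivation correctly produces $\gamma\Delta$ (since $M^u=\gamma\int\|x'\|\,dt\le 2\gamma\int\|(x')_-\|\,dt+\gamma\Delta$), whereas the stated bound \eqref{eq:unfair-gen} has a leading $1$. This is a typo in the theorem statement; the paper's own proof writes $+\Delta$ in \eqref{eq:unfair_zeta} where it should be $+\gamma\Delta$. The discrepancy is immaterial for \eqref{eq:tobeusedincombi}, since there the entire $\Delta$-coefficient is absorbed into $32\gamma(\log U+C+1)$ and the extra factor of $\gamma$ costs nothing. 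So your plan is correct; just don't claim to recover the ``$1$'' in \eqref{eq:unfair-gen}---state the bound with $\gamma$ and proceed.
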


\begin{proof} Notice that
\begin{align}
S^{u}+M^{u} & =\int_{\R_{+}} \llangle \bm{\beta}\odot c(t), x(t)\rrangle\, dt+\gamma\int_{\R_{+}}\|x'(t)\|\,dt \nonumber \\
 & \leq\int_{\R_{+}} \llangle \bm{\beta}\odot c(t), x(t)\rrangle\,dt+2\gamma\int_{\R_{+}}\|(x'(t))_{-}\|\,dt+\Delta\,. \label{eq:unfair_zeta}
\end{align}

Before bounding the movement, we observe that  $\mu(t) \geq 0$. This follows from the equality
$$0 = \sum_{i: x_i(t)\neq0} x'_i(t)  =   \sum_{i: x_i(t)\neq0} \frac{\eta_i}{w_i} (x_i(t) + \delta_i) (\mu(t) - c_i(t) )$$
and the fact that $c_i(t)\geq 0$.

To calculate $\|(x'(t))_-\|$, note that the dynamics \eqref{eq:dynstar2} and $\mu(t) \geq 0$ give
\begin{align*}
   \sum_{i : x_i'(t) \leq 0} w_i |x_i'(t)| & \leq \llangle \bm{\eta} \odot (x(t) + \bm{\delta}), c(t) - \xi(t)\rrangle \\
                                           & = \llangle \bm{\eta} \odot  x(t), c(t)\rrangle  + \llangle \bm{\eta} \odot \bm{\delta}, c(t) - \xi(t)\rrangle.
\end{align*}
Hence (\ref{eq:unfair_zeta}) yields
\begin{align*}
S^{u}+M^{u} & \leq\int_{\R_{+}}\llangle (\bm{\beta}+2\gamma\bm{\eta})\odot c(t)),  x(t)\rrangle\,dt+
\int_{\R_{+}} \llangle c(t)-\xi(t), 2\gamma\bm{\eta}\odot\bm{\delta}\rrangle\,dt+\Delta\\
& =\zeta S+\int_{\R_{+}} \llangle c(t)-\xi(t),2\gamma\bm{\eta}\odot\bm{\delta}\rrangle\,dt+\Delta.
\end{align*}
Now, notice that $\mathrm{Lip}_{\|\cdot\|}(\Phi) \leq \max_{i} \frac{2 \log(1/\delta_i)}{\eta_i} = L $, and thus Lemma \ref{lem:basic} shows that
\[
S\leq S^{*}+LM^{*}+2L\Delta\,,
\]
and Lemma \ref{lem:reas} shows that
\[
   \int_{\R_{+}} \llangle c(t)-\xi(t),2\gamma\bm{\eta}\odot\bm{\delta}\rrangle\,dt\leq2\gamma \langle \bm{\eta}, \bm{\delta}\rangle \left(S+L\Delta\right).
\]
Combining the above three equations establishes \eqref{eq:unfair-gen}.

To verify \eqref{eq:tobeusedincombi}, we note that $\zeta = 8 \gamma (\log(U) + C)$ and $L = 1$. Since $x^2 \log(1/x) \leq  (2/e) x^{3/2}$ for all $x \geq 0$, we have
\[
\langle \bm{\eta}, \bm{\delta}\rangle = \sum_{i=1}^{n} \left(\frac{u_{i}}{U}\right)^{2}\log \left(\frac{U}{u_{i}}\right)\leq\frac{2}{e}\sum_{i=1}^{n}
\left(\frac{u_{i}}{U}\right)^{3/2}\leq\frac{2}{e}\leq 1\,,
\]
where we have used $U = \sum_{i=1}^n u_i$ and $u \geq 0$.
We conclude that
\begin{align*}
S^{u}+M^{u} & \leq(\zeta+2\gamma)(S^{*}+M^{*})+(1+2\zeta+6\gamma)\Delta\\
 & \leq8\gamma(\log(U)+C+1)(S^{*}+M^{*}+4\Delta)\,.\qedhere
\end{align*}
\end{proof}

\subsection{An optimal algorithm for HSTs}
We now describe a general gluing theorem from which \pref{thm:HST} will follow directly.
In this section, we will consider the description of randomized algorithms using continuous time random states $\rho(t)$
rather than the deterministic description via the law of the random state.

Let $X$ be a tree metric where the root has $m$ children below which are subtrees $H_1, \ldots, H_m$
connected to the root by edges of lengths $w_1, \ldots, w_m > 0$.
Denote by $X_1, \ldots, X_m$ the metric spaces corresponding
to the leaves in $H_1,\ldots,H_m$, respectively.
Furthermore, let us assume that $\mathrm{diam}(X_i) \leq \frac{w_i}{4\tau}$ for some $\tau > 1$ and
every $i=1,\ldots,m$.
Applied recursively, this corresponds to the assumption that $X$ is a $4 \tau$-HST.

\begin{theorem} \label{thm:combi}
Suppose that for any $i \in [m]$ there exists an online algorithm on $X_i$ with total cost $\cost_i$ such that
\begin{equation} \label{eq:Xi}
   \cost_i \leq \frac{8\tau}{\tau-1} \left(\log(u_i) + C\right) \left(\cost^*_i + 4\,\diam(X_i)\right) \,,
\end{equation}
where $\cost^*_i$ is the total cost of the offline optimum on $X_i$ with a potentially different intial state than the online algorithm.
Then there exists an online algorithm on $X$ with total cost $\cost$ such that, with $U=\sum_{i=1}^m u_i$,
\begin{equation} \label{eq:toprovecombi}
\cost \leq \frac{8\tau}{\tau-1} \left(\log(U) + C + 1\right) \left(\cost^* + 4\, \diam(X)\right) \,,
\end{equation}
where $\cost^*$ is the total cost of the offline optimum on $X$ with a potentially different intial state than the online algorithm.
\end{theorem}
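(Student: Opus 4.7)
The plan is to apply \pref{thm:unfair} at the top level of $X$, viewing the root together with its $m$ children as a weighted star with $m$ leaves---one per subtree $X_i$---and with edge weights $w_1,\ldots,w_m$. At this outer level I would run the unfair MTS algorithm of \pref{thm:unfair} with unfairness ratios $\beta_i = \frac{8\tau}{\tau-1}(\log u_i + C)$ (matching the inner competitive ratios in \eqref{eq:Xi}) and movement unfairness $\gamma = \frac{\tau}{\tau-1}$. The value of $\gamma$ is dictated by the $4\tau$-HST assumption: a unit of mass transported between subtrees $i$ and $j$ in $X$ incurs cost at most $w_i + w_j + \diam(X_i) + \diam(X_j) \leq (w_i+w_j)(1+\tfrac{1}{2\tau}) \leq \gamma(w_i+w_j)$, i.e.\ at most $\gamma$ times the corresponding $\ell_1(w)$-movement on the star. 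In parallel, inside each subtree $X_i$ I would run the inner online algorithm provided by hypothesis on the restricted cost path $c_t|_{X_i}$, producing a distribution $q_i(t)\in\Delta(X_i)$; the combined algorithm's distribution on $X$ is then $\sum_i p_i(t)q_i(t)$, where $p(t)$ is the outer distribution.

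Next I would define the outer cost function $\hat c_t\colon[m]\to\R_+$ by letting $\hat c_t(i)$ be the instantaneous total-cost rate (service plus movement) of the inner algorithm on $X_i$, so that $\int_{\R_+}\hat c_t(i)\,dt$ equals the total cost of the inner algorithm on $X_i$. With this choice, the combined algorithm's expected total cost is bounded by $\int_{\R_+}\sum_i p_i(t)\hat c_t(i)\,dt + \gamma\int_{\R_+}\|p'(t)\|\,dt$: the first term dominates combined service plus weighted inner movement, the second bounds the outer mass-transport cost in $X$ via the HST property. Since $\beta_i\ge 1$, this quantity is at most $S^u + M^u$ for the outer algorithm, so \eqref{eq:tobeusedincombi} yields
\begin{equation*}
\cost \;\leq\; \frac{8\tau}{\tau-1}(\log U + C + 1)\left(S^{*}_{\mathrm{out}} + M^{*}_{\mathrm{out}} + 4\max_i w_i\right),
\end{equation*}
where $S^{*}_{\mathrm{out}} + M^{*}_{\mathrm{out}}$ is the offline optimum for the outer star under cost path $\hat c$.

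The final step is to bound $S^{*}_{\mathrm{out}} + M^{*}_{\mathrm{out}}$ by $\cost^* + 2\diam(X)$, so that together with $4\max_i w_i \leq 2\diam(X)$ the right-hand side matches \eqref{eq:toprovecombi}. Given an offline trajectory $\bm\rho^*(t)$ on $X$ of total cost at most $\cost^*$, let $I^*(t)\in[m]$ denote the subtree containing $\rho^*(t)$ and take $p^*(t)=e_{I^*(t)}$ as the candidate star trajectory. Its star-movement is bounded by the tree-movement of $\bm\rho^*$, since each subtree transition in $X$ costs at least $w_i+w_j$. For the star-service cost $\int_{\R_+}\hat c_t(I^*(t))\,dt$, I would \emph{restart} the inner algorithm on $X_i$ at each re-entry of $I^*$ into subtree $i$ (which is permitted by the ``possibly different initial state'' clause in \eqref{eq:Xi}); each resulting episode is then bounded via \eqref{eq:Xi} by the inner cost of $\bm\rho^*$ restricted to that episode plus $4\diam(X_i)$. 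The restricted inner-offline contributions sum to at most $\cost^*$, while the $4\diam(X_i)$ additive terms are amortized against the outer mass transport: each re-entry to $X_i$ incurs at least $w_i\geq 4\tau\diam(X_i)$ of outer movement, absorbing the repositioning cost.

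The main technical obstacle is precisely this last step: making the restart strategy work cleanly, so that summing the inner-offline terms over episodes yields at most $\cost^*$, and keeping the diameter bookkeeping to exactly $4\diam(X)$ rather than a larger quantity, is delicate because the form of \eqref{eq:toprovecombi} must be preserved across levels for \pref{thm:combi} to be applied inductively through an HST. The additive $+1$ inside $\log U + C + 1$ in \eqref{eq:tobeusedincombi} is exactly the budget for one layer of recursion, and with $\gamma = \tau/(\tau-1) = O(1)$ for any fixed $\tau>1$, iterating \pref{thm:combi} at every internal node of a $4\tau$-HST will yield \pref{thm:HST}.
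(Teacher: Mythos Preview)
Your overall architecture matches the paper's, but there is a genuine gap in the definition of the outer cost function that breaks the argument.

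You set $\hat c_t(i)=\partial_t(S_i(t)+M_i(t))$, the raw instantaneous cost of the inner online algorithm, and then use the throwaway inequality $\int\sum_i p_i(t)\hat c_t(i)\,dt\le S^u$ via $\beta_i\ge 1$. The paper instead defines $c^u(t)(i)=\tfrac{1}{\beta_i}\partial_t(S_i(t)+M_i(t))$, so that the \emph{unfair} online service cost $\sum_i p_i\,\beta_i\,c^u_i$ is \emph{exactly} the combined inner online cost, not merely an upper bound. This normalization is the crux of the combination, and your version fails at the offline step: with your $\hat c$, the fair star-offline cost over an episode $I_i(k)$ is $\int_{I_i(k)}\hat c_t(i)\,dt$, i.e.\ the raw inner online cost there, and \eqref{eq:Xi} only says this is at most $\beta_i\bigl(\text{inner offline on }I_i(k)+4\,\diam(X_i)\bigr)$. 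Your sentence ``each resulting episode is then bounded via \eqref{eq:Xi} by the inner cost of $\bm\rho^*$ restricted to that episode plus $4\,\diam(X_i)$'' silently drops this factor $\beta_i$. Keeping it gives $S^*_{\mathrm{out}}+M^*_{\mathrm{out}}\lesssim (\max_i\beta_i)\cdot\cost^*$, and after multiplying by $8\gamma(\log U+C+1)$ you pick up an extra $\max_i\beta_i$ factor, which makes the recursion diverge rather than telescope.

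Two smaller points. First, the ``restart'' language is misplaced: the inner online algorithms run continuously and \emph{define} the outer cost path; you cannot restart them based on $I^*(t)$, which the online algorithm does not see. What \eqref{eq:Xi} lets you do is compare, over each episode, the (fixed) inner online increment to an offline that starts wherever $\rho^*$ happens to be---and after dividing by $\beta_i$ this gives exactly the paper's \pref{lem:umtstoopt}. Second, the paper takes the star edge weights to be $(1-\tau^{-1})w_i$ rather than $w_i$; the $\tau^{-1}w_i$ slack is precisely what absorbs the per-episode additive $4\,\diam(X_i)\le w_i/\tau$ terms against the outer movement, which is how one keeps the additive constant at $4\,\diam(X)$ through the recursion.
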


We fix a cost path on $X$ (which induces cost paths on the subspaces $X_i$) and denote $S^*, M^*$ for the total service and/movement cost of some offline
algorithm on $X$.
We use $S_i(t)$ and $M_i(t)$ for the costs of the online algorithm on $X_i$ satisfying \eqref{eq:Xi} up to time $t$.

A key ingredient in the proof is to introduce an unfair metrical task
system on a weighted star with weights $(1-\tau^{-1}) w_1, \hdots, (1-\tau^{-1}) w_m$, and unfair ratios $\gamma \seteq \frac{\tau}{\tau-1}$ and $\beta_i \seteq 8 \gamma (\log(u_i) + C)$.
We define the cost path for this unfair metrical task system by $c^u(t)(i) \seteq \frac{1}{\beta_i} \partial_t(S_i(t) + M_i(t))$. Let us denote by $(S^u)^*$ and $(M^{u})^*$ the (fair) service and movement cost of some offline algorithm on a weighted star with the cost path $c^u$. The following lemma justifies
our consideration of this setting.

\begin{lemma} \label{lem:umtstoopt}
One has
\[
(S^u)^* + (M^u)^* \leq S^* + M^* \,.
\]
\end{lemma}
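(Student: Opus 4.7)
The plan is to exhibit an explicit offline strategy on the weighted star---the one that tracks the subtree of $X$ in which $y(t)$ lies---and to bound its fair service and movement costs by a per-visit application of the competitive guarantee \eqref{eq:Xi}. Writing $y$ for the offline on $X$ achieving $(S^*, M^*)$, I would define $y^u(t) := i^*(t) \in [m]$ as the index of the subtree containing $y(t)$, and denote by $k_i$ the number of visits of $y$ to $X_i$, with visit-intervals $\mathcal{I}_i^{(1)},\ldots,\mathcal{I}_i^{(k_i)}$.

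The movement cost is the easy part. Every switch of $y^u$ corresponds to $y$ crossing from one subtree $X_i$ to another $X_j$, which contributes $(1-\tau^{-1})(w_i+w_j)$ to the star's fair movement and forces $y$ to traverse both root edges for a cost of at least $w_i+w_j$ on $X$. Summing over all transitions, $(M^u)^* \leq (1-\tau^{-1}) M^*_{\mathrm{inter}}$, where $M^*_{\mathrm{inter}}$ is the inter-subtree portion of $M^*$. In addition $M^*_{\mathrm{inter}} \geq 2\sum_i k_i w_i$, since each visit to $X_i$ uses the root edge of weight $w_i$ on both entry and exit.

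The service cost requires more care. Using $c^u(t)(i) = \partial_t \cost_i(t)/\beta_i$,
\[
(S^u)^* = \sum_{i,k} \frac{1}{\beta_i} \int_{\mathcal{I}_i^{(k)}} \partial_t \cost_i(t)\,dt.
\]
The key step is to apply \eqref{eq:Xi} on each visit interval separately. Because the proof of \eqref{eq:Xi} rests on a Bregman-divergence potential whose variation is bounded by $4\diam(X_i)$ regardless of time length, the same regret bound applies on any sub-interval $[t_1,t_2]$: the online cost accumulated over $[t_1,t_2]$ is at most $\beta_i\bigl(\cost^*_i|_{[t_1,t_2]} + 4\diam(X_i)\bigr)$, where $\cost^*_i|_{[t_1,t_2]}$ is the offline cost on $X_i$ over that sub-interval with initial state unrestricted. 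Choosing the offline on $X_i$ during $\mathcal{I}_i^{(k)}$ to simply follow $y(t)$ (which lies in $X_i$ throughout), its cost is precisely the portion of $S^* + M^*_{\mathrm{intra}}$ accrued by $y$ during that visit. Summing across visits and subtrees,
\[
(S^u)^* \leq S^* + M^*_{\mathrm{intra}} + \sum_i 4k_i\diam(X_i).
\]

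Finally, to close, one absorbs the accumulated diameter terms into the movement slack using the $4\tau$-HST property $\diam(X_i) \leq w_i/(4\tau)$:
\[
\sum_i 4k_i\diam(X_i) \leq \tfrac{1}{\tau}\sum_i k_i w_i \leq \tfrac{1}{2\tau} M^*_{\mathrm{inter}}.
\]
Combining with $(M^u)^* \leq (1-\tau^{-1})M^*_{\mathrm{inter}}$ and $M^* = M^*_{\mathrm{intra}} + M^*_{\mathrm{inter}}$,
\[
(S^u)^* + (M^u)^* \leq S^* + M^*_{\mathrm{intra}} + \bigl(1-\tfrac{1}{2\tau}\bigr)M^*_{\mathrm{inter}} \leq S^* + M^*.
\]
The main obstacle is the sub-interval strengthening of \eqref{eq:Xi}: the stated form is over the full horizon, so one must either extract it from the MD-based proof of \eqref{eq:Xi} (where the $4\diam(X_i)$ slack is a single Bregman-divergence penalty accounting for the online's state at the interval's start) or carry the per-sub-interval form as part of the inductive hypothesis in the recursion establishing \pref{thm:combi}.
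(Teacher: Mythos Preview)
Your proposal is correct and follows essentially the same route as the paper: define the star offline by tracking which subtree the $X$-offline occupies, apply \eqref{eq:Xi} on each visit interval to bound the fair star service cost, and absorb the accumulated $4\,\diam(X_i)$ terms against the $\tau^{-1}$-slack in the inter-subtree movement via the $4\tau$-HST assumption $\diam(X_i)\le w_i/(4\tau)$. The paper parameterizes by the number $T_i$ of root-edge traversals rather than your visit count $k_i$ and, like you, invokes \eqref{eq:Xi} on sub-intervals without further comment; your flagging of this sub-interval strengthening is apt, and note that your inequality $M^*_{\mathrm{inter}} \geq 2\sum_i k_i w_i$ needs a small endpoint correction (the initial and final visits each miss one traversal), a sloppiness the paper's proof shares and which is harmlessly absorbed into the $4\,\diam(X)$ slack of \pref{thm:combi}.
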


\begin{proof}
Let $T_i$ be the number of times the offline algorithm
on $X$ uses the edge to the root with weight $w_i$, and consider the disjoint time intervals $I_i(k) \subset [0,T], k \in [T_i], i \in [m]$ such that for $t \in I_i(k)$ the offline algorithm's state is in $X_i$ . One clearly has
\[
   (S^u)^* + (M^u)^* \leq \sum_{i=1}^m \left((1-\tau^{-1}) w_i T_i + \sum_{k=1}^{T_i} \int_{I_i(k)} c^u(t)(i)\,dt\right) \,.
\]
Moreover using \eqref{eq:Xi},
\[
   \int_{I_i(k)} c^u(t)(i)\, dt = \int_{I_i(k)} \frac{1}{\beta_i} \partial_t(S_i(t) + M_i(t))\, dt \leq \int_{I_i(k)} \partial_t(S^*(t) + M^*(t))\, dt + 
   4\, \diam(X_i) \,.
\]
We also note that $\sum_{i,k} \int_{I_i(k)} \partial_t(S^*(t) + M^*(t))\ dt \leq S^* + M^* - \sum_{i=1}^m w_i T_i$,
and thus combined with the two preceding inequalities, we have
\[
   (S^u)^* + (M^u)^* \leq S^*+ M^* + \sum_{i=1}^m T_i \left(4\, \mathrm{diam}(X_i) - \frac{w_i}{\tau}\right).
\]
This concludes the proof, since $\mathrm{diam}(X_i) \leq \frac{w_i}{4\tau}$ by assumption.
\end{proof}

We now have all the ingredients to complete the proof of \pref{thm:combi}.

\begin{proof}[Proof of \pref{thm:combi}]
Consider the algorithm satisfying \eqref{eq:tobeusedincombi} for the unfair metrical task system described above. Our proposed algorithm to satisfy \eqref{eq:toprovecombi} chooses, at any time $t$, the subspace to play in by following the state of the unfair MTS algorithm, and then plays accordingly to the online algorithm satisfying \eqref{eq:Xi} in that subspace. Let $S$ and $M$ be the service and movement cost of this algorithm. By the definition of the unfair MTS, and \eqref{eq:tobeusedincombi}, we have
\[
   S + M \leq S^u + M^u \leq \frac{8\tau}{\tau-1} (\log(U) + C + 1) \left((S^u)^* + (M^u)^* + 4\,\diam(X)\right) \,,
\]
and thus \pref{lem:umtstoopt} concludes the proof of \eqref{eq:toprovecombi}.
\end{proof}

Finally, let us prove our main result.

\begin{proof}[Proof of \pref{thm:HST}]
   Consider a general $n$-point HST metric $(X,d)$.
   By a standard approximation argument, there is an $8$-HST metric $(X,d')$ such that $d \leq d' \leq 8\,d$ and $d'$
   can be realized as the shortest-path metric on a rooted tree $T=(V,E)$ with positive edge lengths $w : E \to (0,\infty)$,
   where $X$ constitutes the set of leaves of $T$ and, moreover, the combinatorial depth of $T$ is $D \leq O(\log n)$.

   If we now apply \pref{thm:combi} (with $\tau=2$ and $u_i = |X_i|$) recursively along $T$, we obtain an online algorithm satisfying
   \[
      \cost \leq 8 \left(\log n + C + D\right) \left(\cost^* + 4\, \diam(X)\right) \leq O(\log n) \left(\cost^* + \diam(X)\right).\qedhere
   \]
\end{proof}

\subsection*{Acknowledgements}

Part of this work was carried out while
M. B. Cohen, Y. T. Lee, and J. R. Lee were
at Microsoft Research in Redmond.
We thank Microsoft for their hospitality.
%M. B. Cohen and A. M\k{a}dry are supported by
%NSF grant CCF-1553428 and an Alfred P. Sloan Fellowship.
J. R. Lee is supported by NSF grants CCF-1616297 and CCF-1407779
and a Simons Investigator Award. 
Y. T. Lee is supported by NSF grant CCF-1740551 and CCF-1749609.

\bibliographystyle{alpha}
\bibliography{MTS}

\end{document}